\newtheorem{myproblem}{\textbf{Problem}}
\newtheorem{mydefinition}{\textbf{Definition}}
\newtheorem{mytheorem}{\textbf{Theorem}}
\newtheorem{mylemma}{\textbf{Lemma}}
\newtheorem{myclaim}{\textbf{Claim}}
\let\OldStatex\Statex
\renewcommand{\Statex}[1][3]{%
  \setlength\@tempdima{\algorithmicindent}%
  \OldStatex\hskip\dimexpr#1\@tempdima\relax
}
\begin{document}


\title{
E-BLOW: E-Beam Lithography Overlapping aware Stencil Planning for MCC System
}

\author{
Bei Yu~\IEEEmembership{Member,~IEEE},
Kun Yuan,
Jhih-Rong Gao,
and David Z. Pan~\IEEEmembership{Fellow,~IEEE}
\thanks{The preliminary version has been presented at IEEE/ACM Design Automation Conference (DAC) in 2013.}
\thanks{B. Yu and D. Z. Pan are with the Department of Electrical and Computer Engineering, University of Texas, Austin, TX 78731 USA.}
\thanks{K. Yuan was with the Department of Electrical and Computer Engineering, University of Texas, Austin, TX 78731 USA.
He is now with Facebook Inc., Menlo Park, CA 94025 USA.}
\thanks{J-R. Gao was with the Department of Electrical and Computer Engineering, University of Texas, Austin, TX 78731 USA.
She is now with Cadence Design Systems, Austin, TX 78752 USA.}
}


\maketitle

\begin{abstract}
Electron beam lithography (EBL) is a promising maskless solution for the technology beyond 14nm logic node.
To overcome its throughput limitation, industry has proposed character projection (CP) technique, where some complex shapes (characters) can be printed in one shot.
Recently the traditional EBL system is extended into multi-column cell (MCC) system to further improve the throughput.
In MCC system, several independent CPs are used to further speed-up the writing process.
Because of the area constraint of stencil, MCC system needs to be packed/planned carefully to take advantage of the characters.
In this paper, we prove that the overlapping aware stencil planning ($\mathsf{OSP}$) problem is NP-hard.
To solve $\mathsf{OSP}$ problem in MCC system, we present a tool, E-BLOW, with several novel speedup techniques,
such as successive relaxation, dynamic programming, and KD-Tree based clustering. 
Experimental results show that, compared with previous works, E-BLOW demonstrates better performance for both conventional EBL system and MCC system.
\end{abstract}

\begin{IEEEkeywords}
Electron Beam Lithography, Overlapping aware Stencil Planning, Multi-Column Cell System
\end{IEEEkeywords}

\section{Introduction}
\label{sec:intro}
As the minimum feature size continues to scale to sub-22nm, the conventional 193nm optical photolithography technology is reaching its printability limit.
In the near future, multiple patterning lithography (MPL) has become one of the viable lithography techniques for 22nm and 14nm logic nodes
\cite{DPL_ICCAD08_Kahng,SADP_DAC2011_Zhang,TPL_ICCAD2011_Yu,MPL_DAC2014_Yu}.
In the longer future, i.e., for the logic nodes beyond 14nm,
extreme ultra violet (EUV), directed self-assembly (DSA), and electric beam lithography (EBL) are promising candidates as next generation lithography technologies \cite{LITH_TCAD2013_Pan}.
Currently, both EUV and DSA suffer from some technical barriers.
EUV technique is delayed due to tremendous technical issues such as lack of power sources, resists, and defect-free masks \cite{EUV_SPIE2010_Arisawa}.
DSA has only the potential to generate contact or via layers \cite{DSA_IEDM2010_Chang}.

EBL system, on the other hand, has been developed for several decades \cite{EBL_SPIE09_Pfeiffer}.
Compared with the traditional lithographic methodologies, EBL has several advantages.
(1) Electron beam can be easily focused into nanometer diameter with charged particle beam,
which can avoid suffering from the diffraction limitation of light.
(2) The price of a photomask set is getting unaffordable, especially through the emerging MPL techniques.
As a maskless technology, EBL can reduce the manufacturing cost. 
(3) EBL allows a great flexibility for fast turnaround times and even late design modifications to correct or adapt a given chip layout.
Because of all these advantages, EBL is being used in mask making, small volume LSI production, and R\&D to develop the technological nodes ahead of mass production.


Conventional EBL system applies variable shaped beam (VSB) technique.
In this mode, the entire layout is decomposed into a set of rectangles,
each being shot into resist by one electron beam.
In the printing process of VSB mode,
at first the electrical gun generates an initial beam, which becomes uniform through the shaping aperture.
Then the second aperture finalizes the target shape with a limited maximum size.
Since each pattern needs to be fractured into pieces of rectangles and printed one by one, the VSB mode suffers from serious throughput problem.

\begin{figure}[tb]
  \centering
  \includegraphics[width=0.4\textwidth]{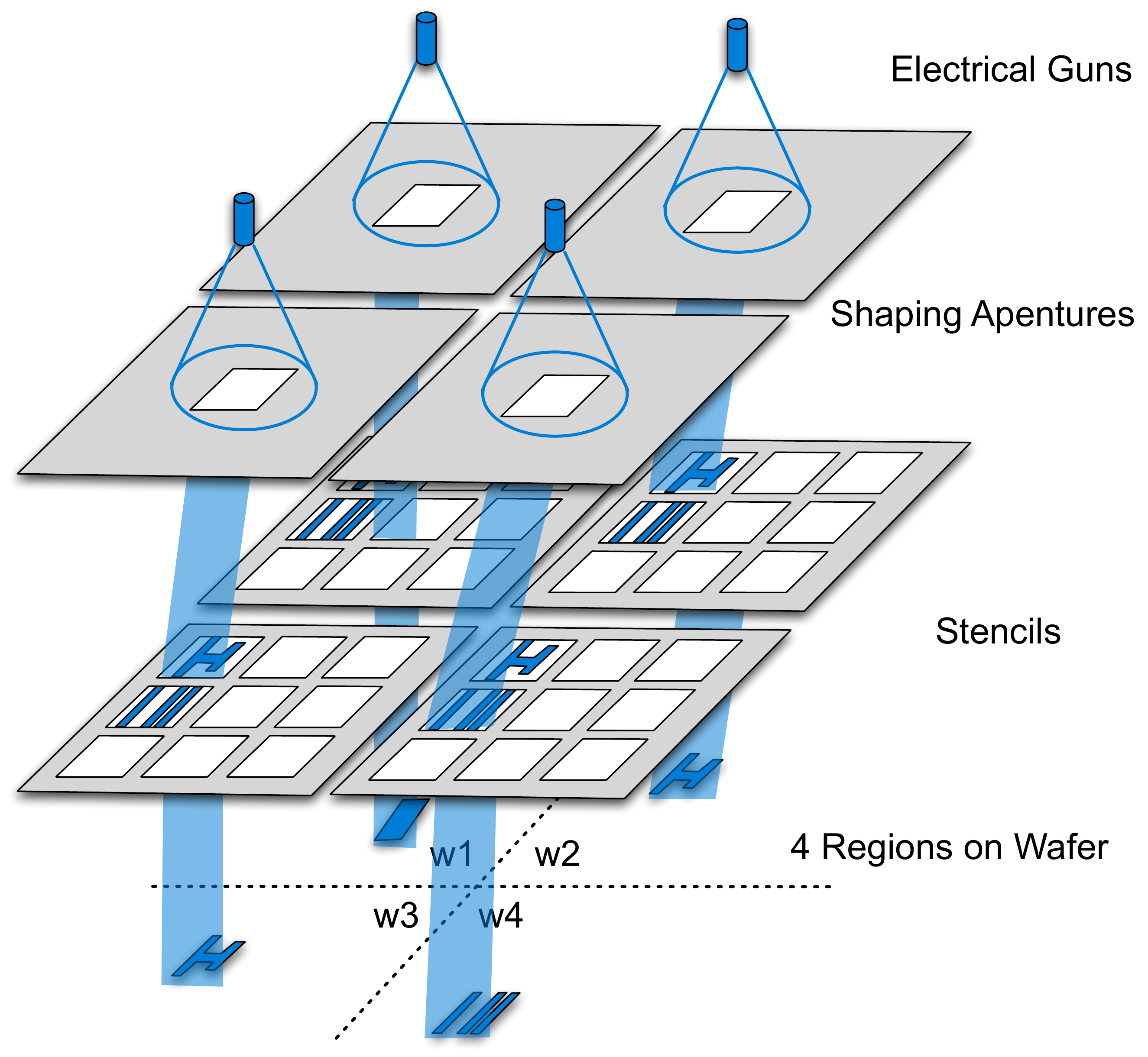}
  \caption{Printing process of MCC system, where four CPs are bundled.}
  \label{fig:projectMCC}
  \vspace{-.1in}
\end{figure}

One improved technique is called character projection (CP) \cite{EBL_SPIE2010_Fujimura},
where the second aperture is replaced by a \textit{stencil}.
Some complex shapes, called \textit{characters}, are prepared on the stencil.
The key idea is that if a pattern is pre-designed on the stencil, it can be printed in one electronic shot,
otherwise it needs to be fractured into a set of rectangles and printed one by one through VSB mode.
By this way the CP mode can improve the throughput significantly.
In addition, CP exposure has a good CD control stability compared with VSB \cite{EBL_SPIE08_Maruyama}.
However, the area constraint of stencil is the bottleneck. 
For modern design, due to the numerous distinct circuit patterns,  only limited number of patterns can be employed on stencil.
Those patterns not contained by stencil are still required to be written by VSB.
Thus one emerging challenge in CP mode is how to pack the characters into stencil to effectively improve the throughput.

Even with decades of development, the key limitation of the EBL system has been and still is the low throughput.
Recently, multi-column cell (MCC) system is proposed as an extension to CP technique \cite{EBL_SPIE04_Yasuda,EBL_SPIE2012_Maruyama}.
In MCC system, several independent character projections (CP) are used to further speed-up the writing process.
Each CP is applied on one section of wafer, and all CPs can work parallelly to achieve better throughput.
In morden MCC system, there are more than 1300 character projections (CPs) \cite{EBL_SPIE2010_Shoji}.
Since one CP is associated with one stencil, there are more than 1300 stencils in total.
The manufacturing of stencil is similar to mask manufacturing.
If each stencil is different, then the stencil preparation process would be very time consuming and expensive.
Due to the design complexity and cost consideration, different CPs share one stencil design.
One example of MCC printing process is illustrated in Fig. \ref{fig:projectMCC}, where four CPs are bundled to generate an MCC system.
In this example, the whole wafer is divided into four regions, $w_1, w_2, w_3$ and $w_4$, and each region is printed through one CP.
Note that the whole writing time of the MCC system is determined by the maximum one of the four regions.
For modern design, because of the numerous distinct circuit patterns, only limited number of patterns can be employed on stencil.
Since the area constraint of stencil is the bottleneck, the stencil should be carefully designed/manufactured to contain the most repeated cells or patterns.

Many previous works dealt with the design optimization for EBL system.
\cite{CEBL_ASPDAC2014_Gao,CEBL_ASPDAC2012_Du} considered EBL as a complementary lithography technique to print via/cut patterns.
\cite{EBL_SPIE03_Babin,EBL_TCAD2013_Fang} solved the subfield scheduling problem to reduce the critical dimension distortion.
\cite{EBL_SPIE06_Kahng,EBL_SPIE2011_Ma,EBL_ASPDAC2013_Yu} proposed a set of layout/mask fracturing approaches to reduce the VSB shot number.
Besides, several works solved the design challenges under CP technique.
\cite{EBL_ASPDAC2012_Du,EBL_ASPDAC2013_Ikeno} proposed several character design methods for both via layers and interconnect layers to achieve stencil area-efficiency.

\begin{figure} [tb]
  \centering
  \subfigure[]{\includegraphics[width=0.20\textwidth]{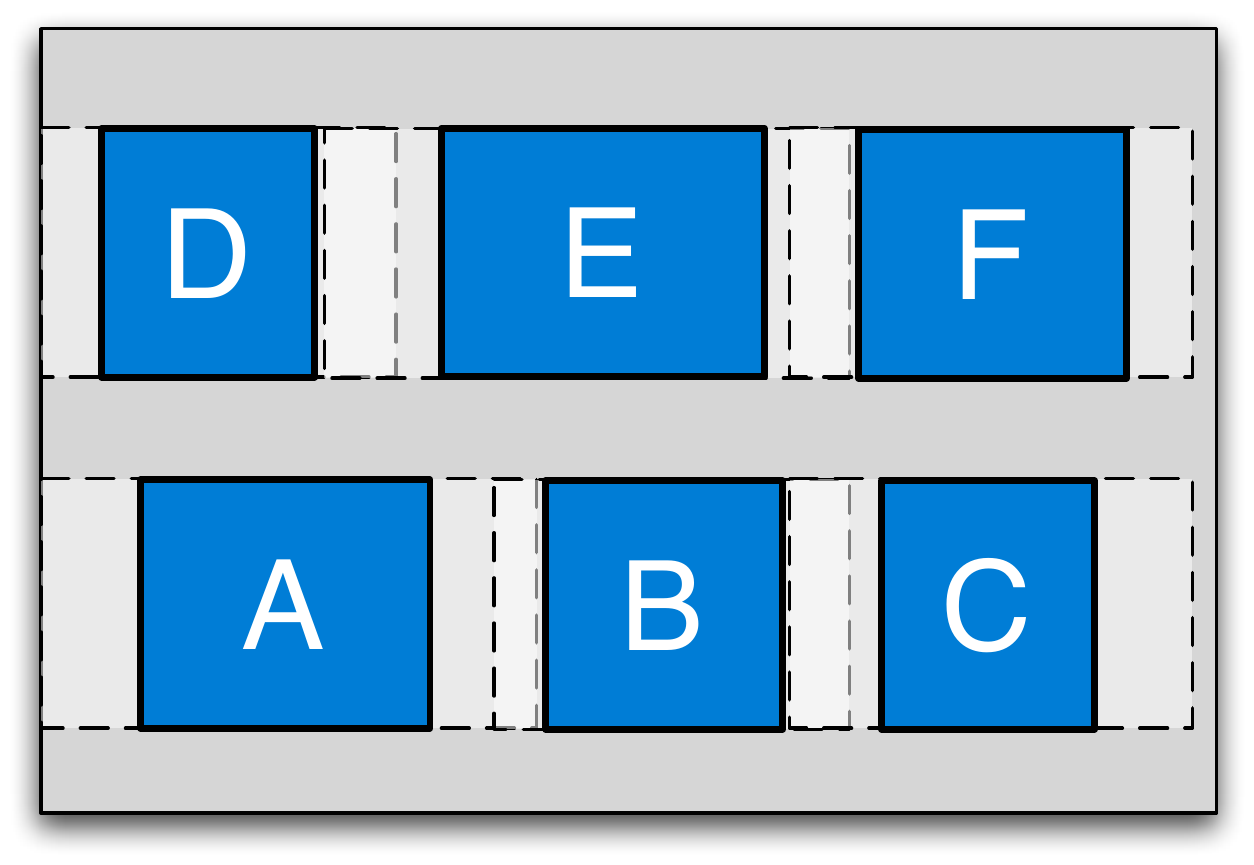}}
  \hspace{.2em}
  \subfigure[]{\includegraphics[width=0.20\textwidth]{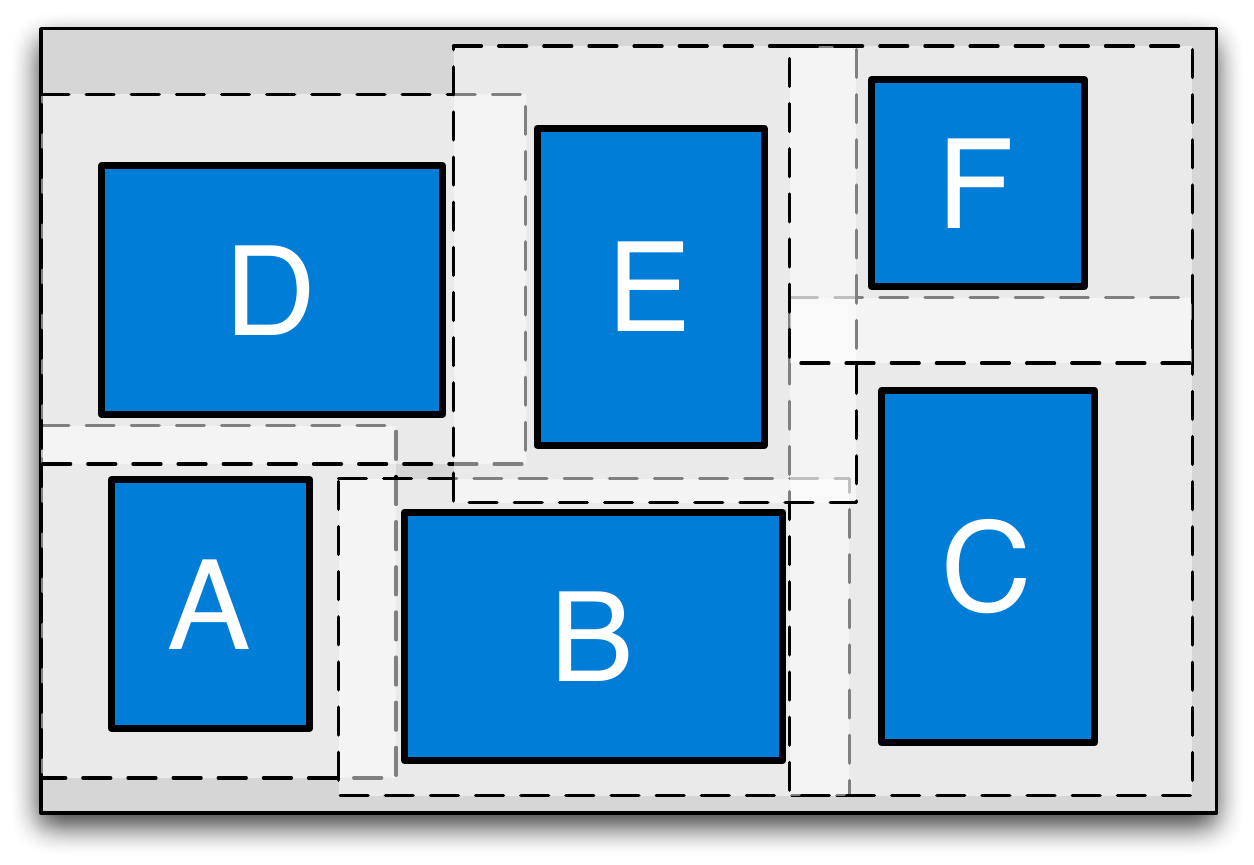}}
  \caption{Two types of $\mathsf{OSP}$ problem. (a) $\mathsf{1DOSP}$. (b) $\mathsf{2DOSP}$.}
  \label{fig:OSP}
  \vspace{-.1in}
\end{figure}

As one of the most challenges in CP mode, stencil planning has earned many attentions
\cite{EBL_IEICE06_Sugihara,EBL_TCAD2012_Yuan,EBL_ISPD2014_Kuang,EBL_ASPDAC2014_Chu,EBL_TCAD2014_Mak,EBL_ASPDAC2015_Guo}.
When blank overlapping is not considered, the stencil planning equals to a character selection problem.
\cite{EBL_IEICE06_Sugihara} proposed an integer linear programming (ILP) formulation to select a group of characters for throughput maximization.
When the characters can be overlapped to save more stencil space,
the corresponding stencil planning is referred as \textit{overlapping-aware stencil planning} ($\mathsf{OSP}$).
\cite{EBL_TCAD2012_Yuan,EBL_ISPD2014_Kuang} investigated on $\mathsf{OSP}$ problem to place more characters onto stencil.
Recently, \cite{EBL_ASPDAC2014_Chu,EBL_TCAD2014_Mak} assumed that the pattern position in each character can be shifted,
and integrated the character re-design into $\mathsf{OSP}$ problem.
As suggested in \cite{EBL_TCAD2012_Yuan}, the $\mathsf{OSP}$ problem can be divided into two types: $\mathsf{1DOSP}$ and $\mathsf{2DOSP}$.
In $\mathsf{1DOSP}$, the standard cells with same height are selected into stencil.
As shown in Fig.~\ref{fig:OSP}(a), each character implements one standard cell, and the enclosed circuit patterns of all the characters have the same height.
Note that here we only show the horizontal blanks, and the vertical blanks are not represented because they are identical.
In $\mathsf{2DOSP}$, the blank spaces of characters are non-uniform along both horizontal and vertical directions.
By this way, stencil can contain both complex via patterns and regular wires.
Fig.~\ref{fig:OSP}(b) illustrates a stencil design example for $\mathsf{2DOSP}$.

Compared with conventional EBL system, MCC system introduces two main challenges in $\mathsf{OSP}$ problem.
First, the objective is new: in MCC system the wafer is divided into several regions, and each region is written by one CP. Therefore the new $\mathsf{OSP}$ should minimize the maximal writing times of all regions.
However, in conventional EBL system the objective is simply minimize the wafer writing time.
Besides, the stencil for an MCC system can contain more than 4000 characters, previous methodologies for EBL system may suffer from runtime penalty.
However, no existing stencil planning work has been done toward the MCC system.

This paper presents E-BLOW, a comprehensive study to the MCC system $\mathsf{1DOSP}$ and $\mathsf{2DOSP}$ problems.
Our main contributions are summarized as follows.
\begin{itemize}
  \item We provide the proof that both $\mathsf{1DOSP}$ and $\mathsf{2DOSP}$ problems are NP-hard.
  \item We formulate integer linear programming (ILP) to co-optimizing characters selection and physical placements on stencil.
        To our best knowledge, this is the first mathematical formulation for both $\mathsf{1DOSP}$ and $\mathsf{2DOSP}$.
  \item We proposes a simplified formulation for $\mathsf{1DOSP}$. 
  \item We present a successive relaxation algorithm to find a near optimal solution.
  \item We design a KD-Tree based clustering algorithm to speedup $\mathsf{2DOSP}$ solution.
\end{itemize}

The rest of this paper is organized as follows.
Section \ref{sec:prelim} provides problem formulation.
Section \ref{sec:1d} presents algorithmic details to resolve $\mathsf{1DOSP}$ problem in E-BLOW,
while section \ref{sec:2d} details the E-BLOW solutions to $\mathsf{2DOSP}$ problem.
Section \ref{sec:result} reports experimental results, followed by the conclusion in Section \ref{sec:conclu}.

\section{Preliminaries}
\label{sec:prelim}

In this section, we provide the preliminaries regarding overlapping aware stencil planning ($\mathsf{OSP}$).
During character design, blank area is usually reserved around its boundaries.
Note in this paper, the blank space refers to the blank around the character boundaries.
The term ``overlapping'' means sharing blanks between adjacent characters.
By this way, more characters can be placed on the stencil \cite{EBL_TCAD2012_Yuan}.
In this section, first we will provide the detailed problem formulation, then we will prove that both $\mathsf{1DOSP}$ and $\mathsf{2DOSP}$ are NP-hard.

\subsection{Problem Formulation}
In an MCC system with $P$ CPs, the whole wafer is divided into $P$ regions $\{r_1, r_2, \dots, r_P\}$,
and each region is written by one particular CP.
We assume cell extraction \cite{EBL_SPIE09_Manakli} has been resolved first.
In other words, a set of character candidates $\{c_1, \cdots, c_n\}$ has already been given to the MCC system.
For each character candidate $c_i$, its writing time through VSB mode is denoted as $n_{i}$,
while its writing time through CP mode is $1$.

The regions of wafer have different layout patterns, and the throughputs would be also different.
Suppose character candidate $c_i$ repeats $t_{ic}$ times on region $r_c$.
Let $a_i$ indicate the selection of  character candidate $c_i$ as follows.
\begin{equation*}
	a_i = 
	\left\{
	\begin{array}{cc}
		1, 		& \textrm{candidate } c_i \textrm{ is selected on stencil}\\
		0,  	& \textrm{otherwise}
	\end{array}
	\right.
\end{equation*}
If $c_i$ is prepared on stencil, the total writing time of pattern $c_i$ on region $r_c$ is $t_{ic} \cdot 1$.
Otherwise, $c_i$ should be printed through VSB.
Since region $r_c$ comprises $t_{ic}$ candidate $c_i$, the writing time would be $t_{ic} \cdot n_i$.
Therefore, for region $r_c$ the total writing time $T_c$ is as follows:
\begin{eqnarray*}
  T_c & = & \sum_{i=1}^{n} a_i \cdot (t_{ic} \cdot 1) + \sum_{i=1}^{n} (1-a_i) \cdot (t_{ic} \cdot n_i) \\
      & = & \sum_{i=1}^n t_{ic} \cdot n_i - \sum_{i=1}^{n} t_{ic}  \cdot (n_i - 1) \cdot a_i                     \\
      & = & T_{c}^{VSB} - \sum_{i=1}^{n} R_{ic} \cdot a_i                                               
\end{eqnarray*}
where we denote $T_{c}^{VSB} = \sum_{i=1}^n t_{ic} \cdot n_i$, and $R_{ic} = t_{ic}  \cdot (n_i - 1)$.
$T_{c}^{VSB}$ shows the writing time on $r_c$ when only VSB is applied, 
and $R_{ic}$ represents the writing time reduction of candidate $c_i$ on region $r_c$.
In MCC system, for each region $r_c$ both $T_{c}^{VSB}$ and $R_{ic}$ are constants. 
Therefore, the total writing time of the MCC system is formulated as follows:
\begin{equation} \label{eq:obj}
\begin{split}
  T_{total} & = \textrm{max} \{T_c\}  \\
            & = \textrm{max} \{T_{c}^{VSB} - \sum_{i=1}^{n} R_{ic} \cdot a_i \},  \forall c \in P 
\end{split}
\end{equation}

Based on the notations above, we define the overlapping aware stencil planning ($\mathsf{OSP}$) for MCC system as follows.

\begin{myproblem}{\textbf{OSP for MCC System}}: 
Given a set of character candidate $C^{C}$, select a subset $C^{CP}$ out of $C^{C}$ as characters, and place them on the stencil.
The objective is to minimize the system writing time $T_{total}$ expressed by Eqn. (\ref{eq:obj}), while the placement of $C^{CP}$ is bounded by the outline of stencil.
The width and height of stencil is $W$ and $H$, respectively.
\end{myproblem}

For convenience, we use the term $\mathsf{OSP}$ to refer $\mathsf{OSP}$ for MCC system in the rest of this paper.

\subsection{NP-Hardness}
\label{sec:npc_proof}

In this subsection we will prove that both $\mathsf{1DOSP}$ and $\mathsf{2DOSP}$ are NP-hard.
To facilitate the proof, we first define a Bounded Subset Sum ($\mathsf{BSS}$) problem as follows.

\begin{myproblem}[{\bf Bounded Subset Sum}]
\label{prb:bss}
Given a list of $n$ numbers $x_1,\cdots, x_n$ and a number $s$,
where $\forall i \in [n] ~ 2\cdot x_i > x_{\max} ( \overset{\bigtriangleup}{=} \underset{i\in [n] }{\max}~|x_i|)$,
decide if there is a subset of the numbers that sums up to $s$.
\end{myproblem}

For example, given three numbers $1100, 1200, 1413$ and $T = 2300$, we can find a subset $\{1100, 1200\}$ such that $1100 + 1200 = 2300$.
Additionally, we can assumption that $t > c\cdot x_{\max}$, where $c$ is some constant.
Otherwise it be solved in $O(n^c)$ time.
Besides, without the bounded constraint $\forall i \in [ n] ~ 2\cdot x_i >  x_{\max}$, the $\mathsf{BSS}$ problem becomes {\bf Subset sum} problem,
which is in NP-complete \cite{NPC_B09_Arora}.
For simplicity of later explanation, let $S$ denote the set of $n$ numbers.
Note that, we can assume that all the numbers are integer numbers.

\begin{mytheorem}\label{thm:npc_bss}
    $\mathsf{BSS}$ problem is NP-complete.
\end{mytheorem}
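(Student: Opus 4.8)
The plan is to establish two things: that $\mathsf{BSS}$ lies in NP, and that it is NP-hard via a polynomial reduction from the ordinary (unbounded) Subset Sum problem, which is already known to be NP-complete \cite{NPC_B09_Arora}. Membership in NP is immediate, so I would dispose of it in one line: a candidate subset is a certificate, and checking that its elements sum to $s$ takes linear time. The entire difficulty therefore lies in the hardness direction, and more precisely in producing a reduced instance that honors the bounded constraint $2 x_i > x_{\max}$ while still faithfully encoding an arbitrary Subset Sum instance.

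First I would fix an arbitrary Subset Sum instance of positive integers $a_1, \dots, a_n$ with target $t$; a trivial preprocessing step lets me assume $0 \le t \le \sum_i a_i$. The key idea is to shift every number by a common large offset $M = 1 + \sum_{i=1}^n a_i$, so that all resulting numbers cluster in the narrow interval $[M, M + \max_i a_i]$. Concretely, I would build $2n$ numbers: a ``real'' number $x_i = M + a_i$ for each $i$, together with $n$ ``dummy'' numbers $y_i = M$, and set the new target to $s = nM + t$. Since every constructed number lies in $[M, M + \max_i a_i]$ and $M > \max_i a_i$, the inequality $2M > M + \max_i a_i$ gives $2z > z_{\max}$ for every element $z$ of the new instance, as the definition of $\mathsf{BSS}$ demands. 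All quantities have bit-length polynomial in the original input, so the reduction runs in polynomial time.

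The correctness argument would use $M$ as a radix that separates \emph{cardinality} from \emph{value}. If a subset selects the reals indexed by $P$ and the dummies indexed by $Q$, its sum equals $(|P|+|Q|)\,M + \sum_{i\in P} a_i$. Because $0 \le \sum_{i\in P} a_i < M$ and $0 \le t < M$, this sum can equal $s = nM + t$ only if $|P|+|Q| = n$ and $\sum_{i\in P} a_i = t$: the high-order part (the multiples of $M$) forces exactly $n$ chosen elements, and the low-order residual forces the selected $a_i$ to hit $t$. This yields both directions at once. A Subset Sum solution $T$ maps to the selection $P = T$ padded with $n - |T|$ arbitrary dummies, which is always possible since $|T| \le n$; conversely, any $\mathsf{BSS}$ solution restricts on $P$ to a Subset Sum solution.

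I expect the main obstacle to be exactly this decoupling step. Shifting by $M$ is what buys the bounded constraint, but it simultaneously makes the attained sum depend on how many elements are picked, a quantity that a plain Subset Sum target has no way to pin down. The dummy elements $y_i = M$ together with the choice $M > \sum_i a_i$ are what remove this interference: they guarantee that no ``carry'' can mix the element-count stored in the multiples of $M$ with the residual value carried by the $a_i$ contributions. Verifying that these two design choices \emph{simultaneously} enforce the bound $2z > z_{\max}$ and this no-carry property is the crux, and it is where I would spend most of the care in writing up the formal argument.
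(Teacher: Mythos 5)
Your proposal is correct, but it proves hardness by a genuinely different reduction than the paper's. The paper reduces directly from $\mathsf{3SAT}$, adapting the classic digit-based $\mathsf{3SAT}$-to-Subset-Sum construction: two numbers $t_i, f_i$ per variable, three slack numbers $c_{j1}, c_{j2}, c_{j3}$ per clause, and --- the only new ingredient needed for $\mathsf{BSS}$ --- an extra leading digit equal to $1$ on every number, so that all numbers lie strictly between $10^{n+2m}$ and $2\cdot 10^{n+2m}$ and the bound $2x_i > x_{\max}$ holds automatically; the target is shifted to $(n+m)\cdot 10^{n+2m}+s_0$ because that construction forces exactly $n+m$ numbers to be selected. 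You instead reduce from ordinary Subset Sum and obtain the bound by adding the common offset $M = 1+\sum_i a_i$ to every element; since a Subset Sum solution has no fixed cardinality, you restore control over the multiples of $M$ by padding with $n$ dummy copies of $M$ and targeting $nM+t$. Your decoupling argument is sound: from $0 \le \sum_{i\in P} a_i < M$ and $0 \le t < M$ one gets $|P|+|Q|=n$ and $\sum_{i\in P} a_i = t$, and the bound follows from $2M > M + \max_i a_i$, i.e., $M > \max_i a_i$. Your route is shorter and more modular, delegating the combinatorial work to the known NP-completeness of Subset Sum; the paper's route is self-contained from $\mathsf{3SAT}$ and gets the cardinality control for free from the structure of a satisfying assignment (exactly one of $t_i,f_i$ per variable, one of $c_{j1},c_{j2},c_{j3}$ per clause). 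Two small points to make explicit when writing yours up: your dummies are $n$ identical copies of $M$, so you rely on $\mathsf{BSS}$ being posed over a \emph{list} (multiset) of numbers, which the paper's Problem 2 indeed permits; and the constructed target satisfies $s \ge nM > \frac{n}{2}\,x_{\max}$, so it is also consistent with the paper's side assumption that $s$ exceeds any constant multiple of $x_{\max}$.
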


The proof is in Appendix.
In the following, we will show that even a simpler version of $\mathsf{1DOSP}$ problem is NP-hard.
In this simpler version, there is only one row in the stencil, and a set of characters $C$ is given.
Besides, the blanks of each character are symmetric, and each character $c_i \in C$ is with the same length $w$.

\begin{mydefinition}[Minimum packing]
Given a subset of characters $C' \in C$, its minimum packing is the packing with the minimum stencil length.
\end{mydefinition}

\vspace{.1in}
\begin{mylemma}
\label{lem:symm}
Given a set of character $C = \{c_1, c_2, \dots, c_n\}$ placed on a single row stencil.
If for each character $c_i \in C$, both of its left and right blanks are $s_i$, then the minimum packing is with the following stencil length
\begin{equation}\label{eq:symm_len}
    n \cdot M - \overset{n}{\underset{i=1}{\sum}} s_i + \max_{i \in [n]} \{s_i\}
\end{equation}
\end{mylemma}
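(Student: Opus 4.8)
The plan is to reduce the geometric packing problem to a purely combinatorial minimization over orderings of the characters, and then to evaluate that minimum by a telescoping (zero-padding) argument.

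First I would observe that on a single row any feasible packing induces a left-to-right ordering $\pi$ of the $n$ characters, and that for a fixed ordering the shortest packing is obtained by sliding every character as far left as its blank constraints permit. Since each character has common footprint $M$ with a symmetric blank $s_i$ on each side, its core occupies width $M-2s_i$. Between two consecutive characters $c_{\pi(k)}$ and $c_{\pi(k+1)}$ the merged blank region must simultaneously supply the right blank $s_{\pi(k)}$ and the left blank $s_{\pi(k+1)}$, so the minimum admissible gap between their cores is $\max\{s_{\pi(k)},s_{\pi(k+1)}\}$. Adding the leftmost left blank $s_{\pi(1)}$ and the rightmost right blank $s_{\pi(n)}$, the total length under ordering $\pi$ is
\begin{equation*}
L(\pi) = nM - 2\sum_{i=1}^{n} s_i + \Big( s_{\pi(1)} + s_{\pi(n)} + \sum_{k=1}^{n-1}\max\{s_{\pi(k)},s_{\pi(k+1)}\} \Big).
\end{equation*}
Hence the minimum packing length equals $nM - 2\sum_i s_i$ plus the minimum over $\pi$ of the parenthesized blank cost.

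Next I would rewrite the blank cost uniformly by padding with zeros: set $a_0 = a_{n+1} = 0$ and $a_k = s_{\pi(k)}$ for $k\in[n]$, so the blank cost becomes $\sum_{k=0}^{n}\max\{a_k,a_{k+1}\}$. The key step is the lower bound $\sum_{k=0}^{n}\max\{a_k,a_{k+1}\} \ge \sum_{i=1}^n s_i + \max_i s_i$. I would prove it via the identity $\max\{a,b\}=a+\max\{b-a,0\}$, which gives $\sum_{k=0}^{n}\max\{a_k,a_{k+1}\} = \sum_{k=0}^{n} a_k + \sum_{k=0}^{n}\max\{a_{k+1}-a_k,0\}$. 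The first sum equals $\sum_i s_i$ since $a_0=0$. The second sum is the total of the positive increments of the walk $a_0,a_1,\dots,a_{n+1}$ that starts and ends at $0$; because the walk must climb from $0$ up to its peak value $\max_i s_i$, its accumulated ascents are at least $\max_i s_i$. This yields the bound.

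Finally I would exhibit equality to pin down the minimum. Arranging the $s_i$ unimodally, increasing up to the largest value and then decreasing, makes every transition a pure ascent or a pure descent, so the positive increments sum to exactly $\max_i s_i$ and the bound is attained; this sorted ordering is a valid packing. Therefore the minimum packing length is exactly $nM - \sum_i s_i + \max_i s_i$, matching Eqn.~(\ref{eq:symm_len}). I expect the main obstacle to be this lower bound on the consecutive-maxima sum: recovering the extra $\max_i s_i$ term beyond the easy $\sum_i s_i$ requires the ascent/zero-padding argument rather than a naive term-by-term estimate, and some care is needed to handle the two open ends of the row correctly.
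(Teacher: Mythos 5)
Your proof is correct, and it takes a genuinely different route from the paper's. The paper sorts the blanks decreasingly and argues by induction that appending the character with the smallest remaining blank at either end of an optimal packing of the first $n-1$ characters increases the total shared blank by exactly $s_n$, so the maximum total overlap is $\sum_{i=2}^{n}s_i$ and the length formula follows. You instead fix an arbitrary ordering $\pi$, write the tightly packed length in closed form as core widths plus the zero-padded consecutive-maxima cost $\sum_{k=0}^{n}\max\{a_k,a_{k+1}\}$, and prove the lower bound $\sum_i s_i+\max_i s_i$ on that cost via the identity $\max\{a,b\}=a+\max\{b-a,0\}$ together with the observation that a walk starting and ending at $0$ must accumulate at least $\max_i s_i$ of ascent; a unimodal arrangement attains the bound. (Your expression $L(\pi)$ is indeed equivalent to the paper's $nM-\sum_{k}\min\{s_{\pi(k)},s_{\pi(k+1)}\}$, since $\max+\min$ telescopes to $2\sum_i s_i - s_{\pi(1)}-s_{\pi(n)}$.) What your approach buys is a genuine optimality certificate over \emph{all} orderings: the paper's recursion directly certifies only packings built by end-insertion in sorted order, and the step asserting that no other ordering can do better is left implicit, whereas your ascent argument closes exactly that gap. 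The paper's argument, in turn, is shorter and directly yields the greedy end-insertion procedure reused later in the refinement stage. Both arrive at the same formula; your write-up is the more rigorous of the two on the lower-bound direction.
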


\begin{proof}
Without loss of generality, we assume that $s_1 \ge s_2 \ge \dots \ge s_n$.
We prove by recursion that in an minimum length packing, the overlapping blank is $f(n) = \sum_{i=2}^n s_i$.
If there are only two characters, it is trivial that $f(2) = s_2$.
We assume that when $p=n-1$, the maximum overlapping blank $f(n-1) = \sum_{i=2}^{n-1} s_i$.
For the last character $c_n$, the maximum sharing blank value is $s_n$.
Since for any $i<n$, $s_i \ge s_n$, we can simply insert it at either the left end or the right end, and find the incremental overlapping blank $s_n$.
Thus $f(n) = f(n-1) + s_n = \sum_{i=2}^n s_i$.
Because the maximum overlapping blank for all characters is $\sum_{i=2}^n s_i$, we can see the minimum packing length is as in Eqn. (\ref{eq:symm_len}).
\end{proof}
\vspace{.1in}

\begin{mylemma}\label{lem:1dosp}
$\mathsf{BSS} \leq_p \mathsf{1DOSP}$.
\end{mylemma}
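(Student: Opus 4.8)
The plan is to exhibit a polynomial-time reduction that turns any $\mathsf{BSS}$ instance into the single-row, equal-length, symmetric-blank version of $\mathsf{1DOSP}$ described just above, so that together with Theorem~\ref{thm:npc_bss} it certifies $\mathsf{1DOSP}$ is NP-hard. Since a subset can be laid out in width $W$ exactly when its \emph{minimum} packing length is at most $W$, Lemma~\ref{lem:symm} will be the workhorse: it lets me replace the geometric packing question by an arithmetic one.

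Given a $\mathsf{BSS}$ instance $x_1,\dots,x_n$ with target $s$ (integers, $2x_i>x_{\max}$), I would build $n$ characters plus one anchor. Set the common footprint to $w=x_{\max}+1$ and give character $c_i$ the symmetric blank $s_i=w-x_i$ and writing-time reduction $R_i=x_i$ (any positive reduction is realizable by suitable $n_i,t_{ic}$). The bounded condition is exactly what makes this legal: $2x_i>x_{\max}$ forces $x_i\ge w/2$, so each $s_i=w-x_i$ lies in $[1,w/2]$ and the core $w-2s_i=2x_i-x_{\max}-1\ge 0$ stays nonnegative. Add an anchor character $c^*$ with blank $s^*=w-x_{\min}=\max_i s_i$ and a dominating reduction $R^*=(\sum_i x_i)+1$. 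Finally set the stencil width to $W=w+s$ and pose the decision question: is there a selection/placement of length at most $W$ whose total reduction is at least $\Theta=R^*+s$ (equivalently, writing time at most $T^{VSB}-\Theta$)?

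The heart of the argument is to evaluate the minimum packing length through Lemma~\ref{lem:symm}. For any $C''\subseteq\{c_1,\dots,c_n\}$, the anchor has (one of) the largest blank in $\{c^*\}\cup C''$, so Lemma~\ref{lem:symm} gives minimum length $(|C''|+1)w-(s^*+\sum_{i\in C''}s_i)+s^*=(|C''|+1)w-\sum_{i\in C''}(w-x_i)=w+\sum_{i\in C''}x_i$. The decisive point is that choosing $s_i=w-x_i$ makes every character's net length contribution equal to $x_i$, so both the cardinality term $|C''|w$ and, via the anchor, the subset-dependent $\max$ term of Lemma~\ref{lem:symm} cancel, leaving the length as the clean affine function $w+\sum_{i\in C''}x_i$ of the selected sum. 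From here the equivalence falls out: the width bound $w+\sum_{i\in C''}x_i\le W=w+s$ is exactly $\sum_{i\in C''}x_i\le s$, while because $R^*$ exceeds the total of all other reductions, meeting $\Theta$ forces $c^*$ to be selected and then forces $\sum_{i\in C''}x_i\ge s$; the two inequalities squeeze $\sum_{i\in C''}x_i=s$. Hence a feasible $\mathsf{1DOSP}$ solution exists iff some subset of the $x_i$ sums to $s$, and the construction is plainly polynomial.

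I expect the main obstacle to be precisely the two extra terms in Lemma~\ref{lem:symm} — the cardinality term $|C'|w$ and the subset-dependent $\max_i s_i$ — which a priori prevent the packing length from being a function of the subset sum alone. Neutralizing them is where the construction must be careful: the anchor pins $\max_i s_i$ to a constant, and the encoding $s_i=w-x_i$ cancels the cardinality term, but this encoding is geometrically valid only because $2x_i>x_{\max}$ keeps every blank within $[0,w/2]$. Verifying these feasibility side-conditions (nonnegative cores, the anchor actually attaining the maximum blank so the cancellation is exact, and empty or degenerate selections not manufacturing spurious solutions) is the part that needs the most attention, whereas the final squeezing of $\sum x_i$ to $s$ is routine.
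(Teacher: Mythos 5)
Your reduction is correct and is essentially the paper's own: the same anchor character with the maximum blank to cancel the $\max_i s_i$ term of Lemma~\ref{lem:symm}, the same encoding of blanks as $w-x_i$ (the paper uses $M=x_{\max}$ where you use $x_{\max}+1$), and the same use of the writing-time objective to force the anchor's selection. Your explicit two-sided squeeze (width bound gives $\sum x_i\le s$, reduction threshold gives $\sum x_i\ge s$) is a slightly more careful statement of the decision question than the paper's ``packing length exactly $M+s$ and writing time smaller than $\sum x_i$,'' but it is the same argument.
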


\begin{proof}
Given an instance of $\mathsf{BSS}$ with $s$ and $S = \{x_1, x_2, \dots, x_n\}$, we construct a $\mathsf{1DOSP}$ instance as follows:
\begin{itemize}
  \item
  The stencil length is set to $M+s$, where $M=\max_{i \in [n]}\{x_i\}$.
  \item
  For each $x_i \in S'$, in $\mathsf{1DOSP}$ there is a character $c_i$, whose width is $M$ and both left and right blanks are $M-x_i$.
  Since $x_i > M/2$, the sum of left blank and right blank is less or equal to $M$.
  \item
  We introduce an additional character $c_0$, whose width size is $M$, and both left and right blanks are $M-\min_{i \in [n]}\{x_i\}$.
  \item
  The VSB writing time of character $c_0$ is set to $\sum_{i \in [n]} x_i$, while the VSB writing time for each character $c_i$ is set to $x_i$.
  The CP writing times are set to 0 for all characters.
  \item
  There is only one region, and each character $c_i$ repeats one time in the region.
\end{itemize}

For instance, given initial set $S = \{1100, 1200, 2000\}$ and $s = 2300$,
the constructed $\mathsf{1DOSP}$ instance is shown in Fig. \ref{fig:1dosp_instance}.

\begin{figure}[tb!]
    \centering
    \subfigure[]{\includegraphics[width=0.304\textwidth]{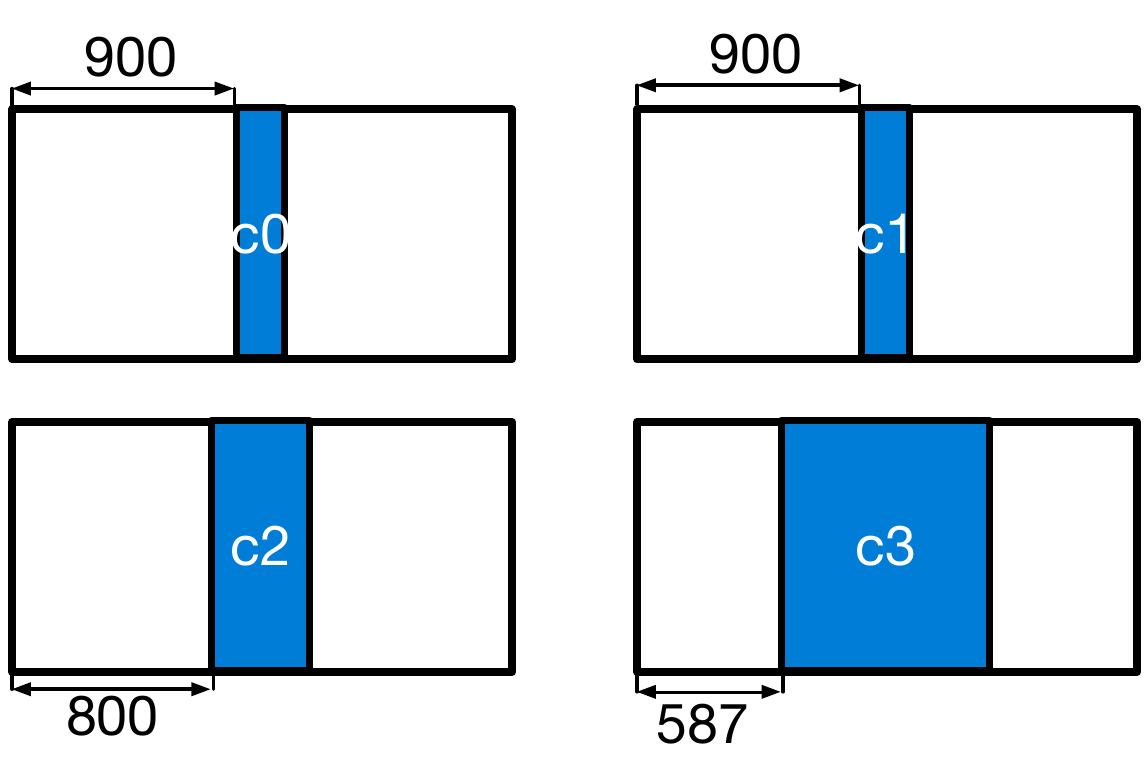}}
    \hspace{.2in}
    \subfigure[]{\includegraphics[width=0.34\textwidth]{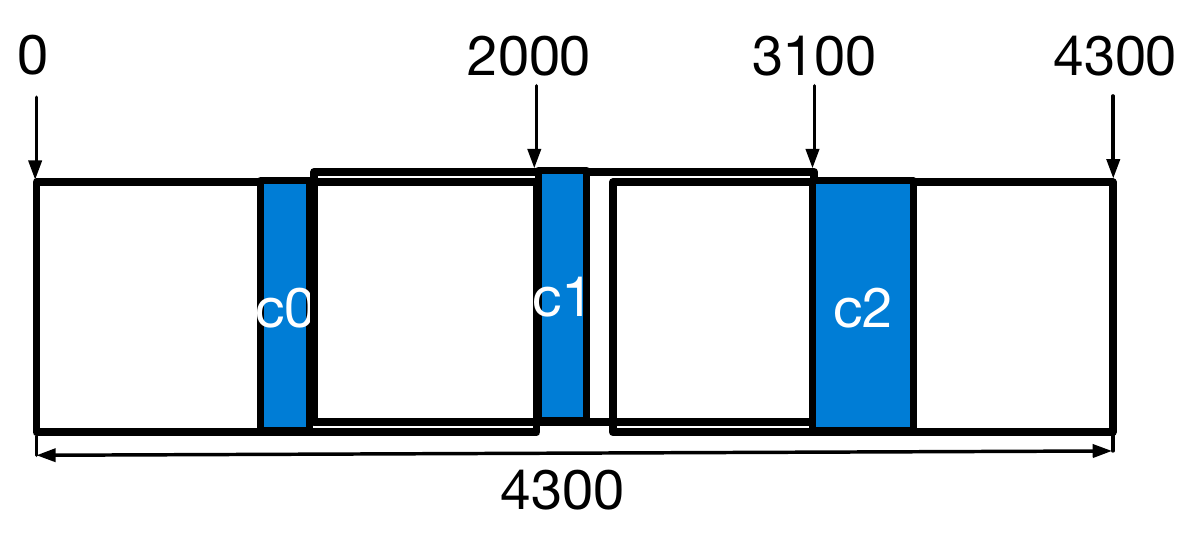}}
    \caption{
    (a) $\mathsf{1DOSP}$ instance for the $\mathsf{BSS}$ instance $S = \{1100, 1200, 2000\}$ and $s = 2300$.
    (b) The minimum packing is with stencil length $M + s = 2000 + 2300 = 4300$.
    }
    \label{fig:1dosp_instance}
    \vspace{-.1in}
\end{figure}

We will show the $\mathsf{BSS}$ instance $S = \{x_1, x_2, \dots, x_n\}$ has a subset that adds up to $s$ if and only if the constructed $\mathsf{1DOSP}$ instance has minimum packing length $M+s$ and total writing time smaller than $\sum x_i$.

($\Rightarrow$ part)
After solving the $\mathsf{BSS}$ problem, a set of items $S'$ are selected that they add up to $s$.
For each $x_i \in S'$, character $c_i$ is also selected into the stencil.
Besides, since the system writing time for $c_0$ is $\sum x_i$, it is trivial to see that in the $\mathsf{1DOSP}$ instance the $c_0$ must be selected.
Due to the Lemma \ref{lem:symm}, the minimum total packing length is
\begin{equation*}
  (n+1) \cdot M - \sum_{i \in S'} (M-x_i) = M + \sum_{i \in S'} x_i = M + s
\end{equation*}
Meanwhile, the minimum total writing time in the $\mathsf{1DOSP}$ is $\sum x_i - s$.

($\Leftarrow$ part)
We start from a $\mathsf{1DOSP}$ instance with minimum packing length $M+s$ and total writing time smaller than $\sum x_i$, where a set of character $C' \in C$ are selected.
Since the total total writing time must be smaller than $\sum x_i$, character $c_0 \in C'$.
For all characters in set $c_i \in C'$ except $c_0$, we select $x_i$ into the subset $S' \in S$, which adds up to $s$.

\end{proof}

\begin{mytheorem}
\label{thm:nph_1d}
$\mathsf{1DOSP}$ is in NP-hard.
\end{mytheorem}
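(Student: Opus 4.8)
The plan is to obtain the theorem immediately by composing the two preceding results, since all of the combinatorial work has already been done. First I would recall Theorem \ref{thm:npc_bss}, which states that $\mathsf{BSS}$ is NP-complete and hence NP-hard. Then I would invoke Lemma \ref{lem:1dosp}, which supplies a polynomial-time many-one reduction $\mathsf{BSS} \leq_p \mathsf{1DOSP}$: from a $\mathsf{BSS}$ instance it constructs, in time polynomial in the input size, a $\mathsf{1DOSP}$ instance of $n+1$ characters whose minimum packing length equals $M+s$ and whose total writing time drops below $\sum x_i$ exactly when the original instance admits a subset summing to $s$. Because polynomial-time reductions compose, the existence of such a reduction from a known NP-hard problem certifies that $\mathsf{1DOSP}$ is NP-hard.

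One point I would make explicit is that Lemma \ref{lem:1dosp} reduces $\mathsf{BSS}$ to the restricted single-row variant introduced just before Lemma \ref{lem:symm} --- every character has identical width $M$ and symmetric left/right blanks --- and that the argument leans on the closed-form minimum-packing length established in Lemma \ref{lem:symm}. Since every instance of this restricted variant is also a legal instance of the general $\mathsf{1DOSP}$ problem (obtained by fixing the stencil to a single row and specializing the width and blank parameters), the restricted problem is a special case of $\mathsf{1DOSP}$. Hardness therefore transfers upward: any algorithm that solves general $\mathsf{1DOSP}$ in particular solves the restricted version, so the general problem is at least as hard.

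Finally I would address the decision-versus-optimization distinction, since NP-hardness is properly a statement about decision problems. I would phrase the target as the decision version of $\mathsf{1DOSP}$ asking whether a packing exists with length at most $M+s$ and total writing time strictly less than $\sum x_i$; this is exactly the predicate that the reduction in Lemma \ref{lem:1dosp} decides, which also yields NP-membership and hence NP-completeness of that decision version. The optimization version that minimizes $T_{total}$ is then NP-hard because an oracle for it answers this decision question. I do not expect any genuine obstacle here, as the theorem is a formal corollary of Lemma \ref{lem:1dosp}; the only care needed is the bookkeeping just described --- confirming the reduction runs in polynomial time and stating the special-case subsumption cleanly.
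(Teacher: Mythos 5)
Your proposal is correct and matches the paper's proof, which likewise derives the theorem directly by composing Theorem \ref{thm:npc_bss} with the reduction in Lemma \ref{lem:1dosp}. The extra bookkeeping you supply (special-case subsumption into general $\mathsf{1DOSP}$ and the decision-versus-optimization distinction) is sound and only makes explicit what the paper leaves implicit.
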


\begin{proof}
Directly from Lemma \ref{lem:1dosp} and Theorem \ref{thm:npc_bss}.
\end{proof}

\begin{mytheorem}
\label{thm:nph_2d}
$\mathsf{2DOSP}$ is in NP-hard.
\end{mytheorem}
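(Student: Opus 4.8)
The plan is to establish the hardness of $\mathsf{2DOSP}$ by a polynomial reduction from $\mathsf{1DOSP}$, that is, to show $\mathsf{1DOSP} \leq_p \mathsf{2DOSP}$, and then invoke Theorem~\ref{thm:nph_1d}. The guiding observation is that $\mathsf{1DOSP}$ is nothing more than the restriction of $\mathsf{2DOSP}$ in which every character shares a common height and all vertical blanks coincide. Thus the reduction should amount to embedding a single-row instance into the two-dimensional setting in a way that introduces no genuine extra degrees of freedom, so that solving the constructed $\mathsf{2DOSP}$ instance is tantamount to solving the original $\mathsf{1DOSP}$ instance.

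Concretely, given an arbitrary $\mathsf{1DOSP}$ instance with characters $\{c_1,\dots,c_n\}$ of common height $h$, together with their horizontal widths, horizontal blanks, VSB and CP writing times, and per-region repetition counts, I would build a $\mathsf{2DOSP}$ instance that copies all of these horizontal data verbatim, fixes the stencil height to $H=h$, and assigns every character identical (say zero) top and bottom blanks. This construction is plainly computable in polynomial time, since it merely duplicates the input and appends uniform vertical parameters; the stencil width $W$ is carried over unchanged.

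The correctness argument is the heart of the matter. Because $H$ equals the common character height, no two characters can be stacked: the stencil admits exactly one row, so every feasible $\mathsf{2DOSP}$ placement is forced to be a horizontal packing. Since all vertical blanks are identical, vertical overlapping contributes no savings and is invisible to the objective. Hence the feasible placements of the $\mathsf{2DOSP}$ instance are in bijection with the feasible packings of the original $\mathsf{1DOSP}$ instance, and on corresponding solutions the writing-time objective $T_{total}$ of Eqn.~(\ref{eq:obj}) takes identical values. Consequently an optimal $\mathsf{2DOSP}$ solution yields an optimal $\mathsf{1DOSP}$ solution and conversely, which establishes the reduction and, by Theorem~\ref{thm:nph_1d}, the NP-hardness of $\mathsf{2DOSP}$.

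The main obstacle, and indeed the only step demanding real justification, is verifying that the two-dimensional freedom genuinely collapses to one dimension. One must rule out that some clever vertical arrangement in the $\mathsf{2DOSP}$ instance could beat every single-row layout, for instance by partially offsetting characters vertically so as to manufacture additional horizontal overlap. Pinning the stencil height to exactly $h$ removes this loophole by geometry, but the proof should make explicit that, with zero vertical blanks and a one-row stencil, no such offset is admissible. Once this degeneracy is confirmed, the equivalence of the two objectives is immediate and the theorem follows.
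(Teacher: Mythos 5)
Your proposal is correct and follows the same route as the paper, which simply observes that $\mathsf{1DOSP}$ is the special case of $\mathsf{2DOSP}$ with uniform heights and vertical blanks and invokes Theorem~\ref{thm:nph_1d}; your version merely makes the embedding (setting $H$ to the common character height and zeroing the vertical blanks so that no stacking or vertical offset is feasible) explicit. The only caveat is that the reduction should be stated from the single-row $\mathsf{1DOSP}$ instances, which are exactly what Lemma~\ref{lem:1dosp} shows to be NP-hard, so that fixing $H=h$ is faithful.
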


Since $\mathsf{1DOSP}$ is a special case of $\mathsf{2DOSP}$.
Due to the NP-hardness of $\mathsf{1DOSP}$, the $\mathsf{2DOSP}$ problem is NP-hard as well.
Combining Theorem \ref{thm:nph_1d} and Theorem \ref{thm:nph_2d},
we can achieve the conclusion that $\mathsf{OSP}$ problem, even for conventional EBL system, is NP-hard.

\section{E-BLOW for $\mathsf{1DOSP}$}
\label{sec:1d}

When each character implements one standard cell, the enclosed circuit patterns of all the characters have the same height.
The corresponding $\mathsf{OSP}$ problem is called $\mathsf{1DOSP}$, which can be viewed as a combination of character selection and single row ordering problems
\cite{EBL_TCAD2012_Yuan}.
Different from two-step heuristic proposed in \cite{EBL_TCAD2012_Yuan},
we show that these two problems can be solved simultaneously through a unified ILP formulation (\ref{eq:1ilp}).
For convenience, Table \ref{tab:notations1} lists the notations used in $\mathsf{1DOSP}$ problem.

\begin{table}[tb!]
\centering 
\caption{Notations used in 1D-ILP Formulation}
\label{tab:notations1}
\resizebox{7.4cm}{!}{
  \begin{tabular}{|l|l|}
  \hline
  \hline   $W$             & width constraint of stencil or row \\
  \hline   $n$             & number of characters \\
  \hline   $m$             & number of rows \\
  \hline   $x_i$           & x-position of character $c_i$ \\
  \hline   $w_i$           & width of character $c_i$ \\
  \hline   $o_{ij}^h$      & horizontal overlap between $c_i$ and $c_j$ \\
  \hline   $p_{ij}$        & 0-1 variable, $p_{ij} = 0$ if $c_i$ is left of $c_j$\\
  \hline   $a_{ij}$        & 0-1 variable, $a_{ij} = 1$ if $c_i$ is on $j$th row\\
  \hline \hline
  \end{tabular}
}
\end{table}

\begin{figure}[htb]
{
\small
\begin{align}
 \textrm{min} \ & \ \ T_{total}  & \label{eq:1ilp}\\
 \textrm{s.t} \ \  
 & T_{total} \ge T_{c}^{VSB} - \sum_{i=1}^{n}(\sum_{k=1}^{m} R_{ic} \cdot a_{ik}) & \forall c \in P   \label{1ilp_a}\tag{\ref{eq:1ilp}$a$}\\
 & 0 \le x_i \le W - w_i                                                          & \forall i         \label{1ilp_b}\tag{\ref{eq:1ilp}$b$}\\
 & \sum_{k=1}^m a_{ik} \le 1                                                      & \forall i         \label{1ilp_c}\tag{\ref{eq:1ilp}$c$}\\
 & x_i + w_{ij} - x_j \le W (2 + p_{ij} - a_{ik} - a_{jk})                        & \forall i,j	      \label{1ilp_d}\tag{\ref{eq:1ilp}$d$}\\
 & x_j + w_{ji} - x_i \le W (3 - p_{ij} - a_{ik} - a_{jk})                        & \forall i,j	      \label{1ilp_e}\tag{\ref{eq:1ilp}$e$}\\
 & a_{ik}, a_{jk}, p_{ij}: 0-1 \ \textrm{variable}                                & \forall i,j	      \label{1ilp_f}\tag{\ref{eq:1ilp}$f$}
\end{align}
}
\vspace{-.2in}
\end{figure}

In formulation (\ref{eq:1ilp}), $W$ is the stencil width, $m$ is the number of rows.
For each character $c_i$, $w_i$ and $x_i$ are the width and the x-position, respectively.
If and only if $c_i$ is assigned to $k$-th row, $a_{ik}=1$. Otherwise, $a_{ik}=0$.
Constraints (\ref{1ilp_d}) (\ref{1ilp_e}) are used to check position relationship between $c_i$ and $c_j$.
Here $w_{ij} = w_i -o_{ij}^h$ and
$w_{ji} = w_j -o_{ji}^h$,
where $o_{ij}^h$ is the overlapping when candidates $c_i$ and $c_j$ are packed together.
Only when $a_{ik} = a_{jk} = 1$, i.e. both character $i$ and character $j$ are assigned to row $k$,
one of the two constraints (\ref{1ilp_d}) (\ref{1ilp_e}) will be active.
Besides, for any three characters $c_1, c_2, c_3$ being assigned to row $k$, i.e., $a_{1k}=a_{2k}=a_{3k}=1$,
the $p_{12}, p_{13}$ and $p_{23}$ are self-consistent.
That is, if $c_1$ is on the left of $c_2$ ($p_{12}=0$) and $c_2$ is on the left of $c_3$ ($p_{23}=0$),
then $c_1$ should be on the left of $c_3$ ($p_{13}=0$).
Similarly, if $c_1$ is on the right of $c_2$ ($p_{12}=1$) and $c_2$ is on the right of $c_3$ ($p_{23}=1$),
then $c_1$ should be on the right of $c_3$ ($p_{13}=1$) as well.

Since ILP is a well known NP-hard problem, directly solving it may suffer from long runtime penalty. 
One straightforward speedup method is to relax the ILP into the corresponding linear programming (LP) through replacing constraints (\ref{1ilp_f}) by the following:
\begin{equation*}
  0 \le a_{ik}, a_{jk}, p_{ij} \le 1
\end{equation*}

It is obvious that the LP solution provides a lower bound to the ILP solution.
However, we observe that the solution of relaxed LP could be like this:
for each $i$, $\sum_j a_{ij}=1$ and all the $p_{ij}$ are assigned $0.5$.
Although the objective function is minimized and all the constraints are satisfied,
this LP relaxation provides no useful information to guide future rounding, i.e.,
all the character candidates are selected and no ordering relationship is determined.

\begin{figure} [tb]
  \centering
  \includegraphics[width=0.44\textwidth]{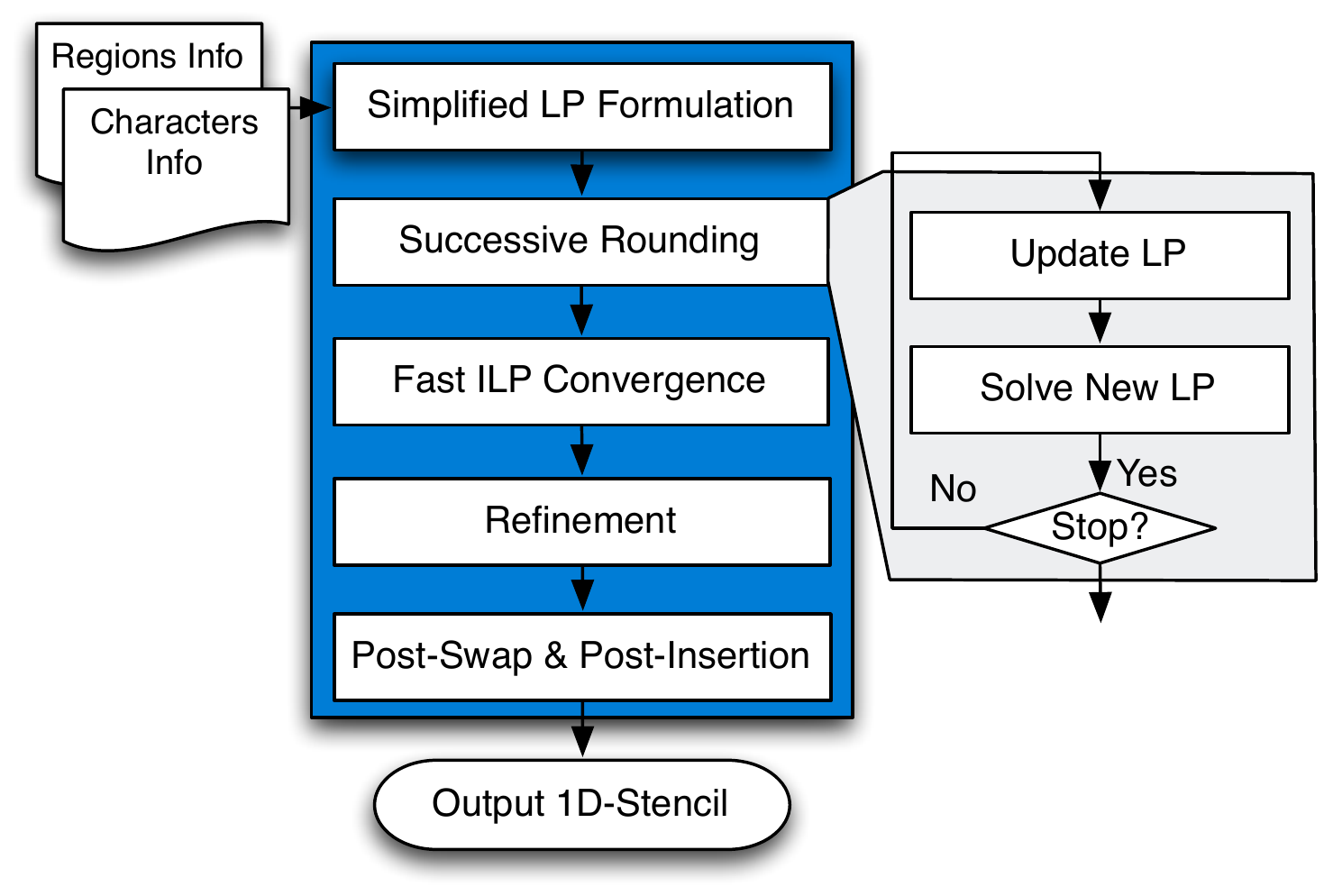}
  \caption{E-BLOW overall flow for $\mathsf{1DOSP}$.}
  \label{fig:Flow1D}
  \vspace{-.1in}
\end{figure}

To overcome the limitation of above rounding, E-BLOW proposes a novel successive rounding framework to search near-optimal solution in reasonable runtime.
As shown in Fig. \ref{fig:Flow1D}, the overall flow includes several steps:
Simplified ILP formulation, Successive Rounding, Fast ILP Convergence, Refinement, Post-swap and Post-Insertion.
In section \ref{sec:lp} the simplified formulation will be discussed, and its LP rounding lower bound will be proved.
In section \ref{sec:succ} the details of successive rounding would be introduced.
In section \ref{sec:ilp} the Fast ILP convergence technique would be presented.
In section \ref{sec:refine} the refinement process is proposed.
At last, to further improve the performance, in section \ref{sec:post} the post-swap and post-insertion techniques are discussed.

\subsection{Simplified ILP Formulation}
\label{sec:lp}

As discussed above, solving the ILP formulation (\ref{eq:1ilp}) is very time consuming, and the related LP relaxation may be bad in performance.
To overcome the limitations of (\ref{eq:1ilp}), in this section we introduce a simplified ILP formulation, whose LP relaxation can provide good lower bound.
The simplified formulation is based on a \textit{symmetrical blank} (S-Blank) assumption:
the blanks of each character are symmetric, i.e., left blank equals to right blank.
$s_i$ is used to denote the blank of character $c_i$.
Note that for different characters $c_i$ and $c_j$, their blanks $s_i$ and $s_j$ can be different.

At first glance the S-Blank assumption may lose optimality.
However, it provides several practical and theoretical benefits.
(1) In \cite{EBL_TCAD2012_Yuan} the single row ordering problem was transferred into Hamilton Cycle problem, 
which is a well known NP-hard problem and even particular solver is quite expensive.
In our work, instead of relying on expensive solver, under this assumption the problem can be optimally solved in $O(n)$.
(2) Under S-Blank assumption, the ILP formulation can be effectively simplified to provide a reasonable rounding bound theoretically.
Compared with previous heuristic framework \cite{EBL_TCAD2012_Yuan}, the proved rounding bound provides a better guideline for a global view search.
(3) To compensate the inaccuracy in the asymmetrical blank cases, E-BLOW provides a refinement (see section \ref{sec:refine}).

%

The simplified ILP formulation is shown in Eqn. (\ref{fast}).

\begin{figure}
\begin{align}
    \textrm{max}  & \ \sum_i \sum_j  a_{ij} \cdot profit_i   &               \label{fast}\\
    \textrm{s.t.} \ \
    & \sum_i (w_i - s_i) \cdot a_{ij} \le W - B_j            & \forall j     \label{fast_a}\tag{\ref{fast}$a$}\\
    & B_j \ge s_i \cdot a_{ij}                               & \forall i,j   \label{fast_b}\tag{\ref{fast}$b$}\\
    & \sum_j a_{ij}  \le 1                                   & \forall i     \label{fast_c}\tag{\ref{fast}$c$}\\
    & a_{ij} = 0 \ \textrm{or}\ 1                            & \forall i,j   \label{fast_d}\tag{\ref{fast}$d$}
\end{align}
\vspace{-.2in}
\end{figure}

In the objective function of Eqn. (\ref{fast}), each character $c_i$ is associated with one profit value $profit_i$.
The $profit_i$ value is to evaluate the overall system writing time improvement if character $c_i$ is selected.
Through assigning each character $c_i$ with one profit value, we can simplify the complex constraint (\ref{1ilp_a}).
More details regarding the profit value setting would be discussed in Section \ref{sec:succ}.
Besides, due to Lemma \ref{lem:symm}, constraint (\ref{fast_a}) and constraint (\ref{fast_b}) are for row width calculation,
where (\ref{fast_b}) is to linearize $max$ operation.
Here $B_j$ can be viewed as the maximum blank space of all the characters on row $r_j$.
Constraint (\ref{fast_c}) implies each character can be assigned into at most one row.
It's easy to see that the number of variables is $O(nm)$, where $n$ is the number of characters, and $m$ is the number of rows.
Generally speaking, single character number $n$ is much larger than row number $m$.
Thus compared with basic ILP formulation (\ref{eq:1ilp}), the variable number in (\ref{fast}) can be reduced dramatically.

In our implementation, we set $s_i$ to $\lceil (sl_i + sr_i ) / 2 \rceil$, where $sl_i$ and $sr_i$ are $c_i$'s left blank and right blank, respectively.
Note that here the ceiling function is used to make sure that under the S-Blank assumption, each blank is still integral.
Although this setting may loss some optimality,  E-BLOW provides post-stage to compensate the inaccuracy through incremental character insertion.

Now we will show that the LP relaxation of (\ref{fast}) has reasonable lower bound.
To explain this, let us first look at a similar formulation (\ref{knapsack}) as follows:
\begin{align}
    \textrm{max}  & \ \sum_i \sum_j a_{ij} \cdot profit_i     &              \label{knapsack}\\
    \textrm{s.t.} \ \ 
    & \sum_i (w_i - s_i) \cdot a_{ij} \le W - max_s      	  & \forall j    \label{knapsack_a}\tag{\ref{knapsack}$a$}\\
    &  (4c) - (4d)                                            &              \notag
\end{align}
where $max_s$ is the maximum horizontal blank length of every character, i.e. $max_s = \textrm{max}\{s_i | i = 1, 2, \dots, n\}$.
Program (\ref{knapsack}) is a multiple knapsack problem \cite{Knapsack_b90_Martello}.
A multiple knapsack is similar to a knapsack problem, with the difference that there are multiple knapsacks.
In formulation (\ref{knapsack}), each $profit_i$ can be rephrased as $(w_i-s_i) \times ratio_i$.

\begin{mylemma}
\label{lem:half_bound}
If each $ratio_i$ is the same, the multiple knapsack problem (\ref{knapsack}) can find a $0.5-$approximation algorithm using LP rounding method.
\end{mylemma}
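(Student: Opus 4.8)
The plan is to use the LP relaxation of (\ref{knapsack}) only to obtain an upper bound on the optimum, and then to build a feasible integral packing whose value is at least half of that bound. The starting observation is that the hypothesis ``all $ratio_i$ equal'' is exactly what makes the argument go through: writing $ratio_i \equiv r$, and abbreviating $v_i := w_i - s_i$ and $C := W - max_s$, the objective becomes $r\sum_{i,j} v_i a_{ij}$, so the quantity being maximized is precisely $r$ times the total weight packed into the $m$ rows, and that weight is measured by the \emph{same} coefficients $v_i$ that appear in the capacity constraints (\ref{knapsack_a}). Summing (\ref{knapsack_a}) over the $m$ rows gives $\sum_{i,j} v_i a_{ij} \le mC$, hence the LP optimum satisfies $Z_{LP} \le r\,mC$; and since the LP relaxes the ILP, $Z_{LP}\ge \mathrm{OPT}$. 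I will also assume, as is standard for knapsack-type problems, that every character fits into one empty row, i.e. $v_i \le C$ for all $i$ (a character violating this can never be selected, so it is harmless to discard).

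For the rounding step I would sort the characters in decreasing order of $v_i$ and assign them greedily: scan the list once and place each character into any row that still has room for it, skipping a character only when it fits in no row (first-fit decreasing). Let $L_j$ denote the resulting load of row $j$; this assignment trivially respects (\ref{knapsack_a}) and (\ref{fast_c}). If every character gets placed, the integral value equals $r\sum_i v_i \ge Z_{LP} \ge \mathrm{OPT}$ and we are already done. Otherwise let $x$ be the first character that could not be placed, with weight $v_x$. Since $v_x \le C$, no row can have been empty when $x$ was processed (else $x$ would fit there), so every row is nonempty, and each row $j$ satisfies $L_j > C - v_x$ at that moment, hence also at the end.

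The key step is then to show every row ends up more than half full, which I handle by a case split on $v_x$. If $v_x \le C/2$, then directly $L_j > C - v_x \ge C/2$. If instead $v_x > C/2$, I invoke the decreasing order: every character placed before $x$ has weight $\ge v_x$, so each (nonempty) row already contains at least one character of weight $> C/2$, again giving $L_j > C/2$. In both cases $\sum_j L_j > mC/2$, so the integral value is $r\sum_j L_j > \tfrac12 r\,mC \ge \tfrac12 Z_{LP} \ge \tfrac12 \mathrm{OPT}$, which is the claimed $0.5$ ratio.

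The main obstacle, and the reason a naive first-fit in arbitrary order is insufficient, is precisely the large blocked item $v_x > C/2$: a row filled to only $C - v_x$ may be almost empty, so the per-row lower bound of $C/2$ fails unless heavy characters are forced in first. Processing in decreasing weight order is exactly what rescues this case, since it guarantees that any nonempty row already holds a heavy character. The remaining work is routine bookkeeping: verifying the greedy assignment never overflows a row and never places a character twice, and confirming that the fitting assumption $v_i \le C$ is genuinely without loss (any character with $v_i > C$ is infeasible for \emph{every} integral solution and can be removed without changing $\mathrm{OPT}$, even though it may inflate $Z_{LP}$).
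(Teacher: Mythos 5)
Your proof is correct, but it takes a different route from the paper: the paper does not prove Lemma \ref{lem:half_bound} at all --- it explicitly omits the proof, defers to the multiple-knapsack literature (\cite{Knapsack_JCO00_Dawande}), and remarks that with equal $ratio_i$ the program can be approximated via a max-flow computation. Your argument is instead a self-contained, elementary one: you observe that equal ratios make the objective proportional to the total packed weight measured in the \emph{same} units $v_i = w_i - s_i$ as the capacity constraints, you bound the LP optimum by $\min\{r\sum_i v_i,\ r\,mC\}$ using the two constraint families, and you run first-fit-decreasing, showing every bin ends more than half full via the case split on whether the first rejected item exceeds $C/2$. The logic is sound, including the treatment of oversized items (discarding $v_i > C$ preserves $\mathrm{OPT}$ even if it changes $Z_{LP}$, and you only ever compare against $\mathrm{OPT}$ through the reduced LP). The one mismatch is with the letter of the statement: the lemma promises a $0.5$-approximation \emph{using LP rounding}, whereas your algorithm never solves or rounds the LP --- the relaxation appears only as an analytical upper bound. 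That is harmless for the way the lemma is used downstream (Lemma \ref{lem:alpha_bound} and the justification that program (\ref{fast}) has a reasonable rounding bound only need the existence of the $0.5$ guarantee certified against the LP optimum), but a genuine LP-rounding proof in the style of the cited reference would align better with the successive-rounding framework the paper builds on this lemma. What your version buys in exchange is transparency: it makes explicit \emph{why} the equal-ratio hypothesis is essential, namely that profit and size coincide up to the common factor $r$, which the paper leaves implicit.
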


For brevity we omit the proof, detailed explanations can be found in \cite{Knapsack_JCO00_Dawande}.
When all $ratio_i$ are the same, formulation (\ref{knapsack}) can be approximated to a max-flow problem.
In addition, if we denote $\alpha$ as $\textrm{min} \{ratio_i\}$ /$\textrm{max} \{ratio_i\}$, we can achieve the following Lemma:

\begin{mylemma}
\label{lem:alpha_bound}
The LP rounding solution of (\ref{knapsack}) can be a $0.5\alpha-$ approximation to optimal solution of (\ref{knapsack}).
\end{mylemma}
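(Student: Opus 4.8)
The plan is to reduce the given non-uniform instance to the uniform-ratio case already covered by Lemma~\ref{lem:half_bound}, and then pay for the reduction with the spread factor $\alpha$. First I would build an auxiliary instance $P'$ that is identical to (\ref{knapsack}) except that every $ratio_i$ is lowered to $ratio_{\min}=\min_i\{ratio_i\}$, so that $profit_i'=(w_i-s_i)\cdot ratio_{\min}$. The constraints (\ref{knapsack_a}), (\ref{fast_c})--(\ref{fast_d}) are untouched, so $P'$ and the original problem share exactly the same feasible region; only the objective changes. Since all ratios in $P'$ now coincide, Lemma~\ref{lem:half_bound} applies verbatim and the LP-rounding procedure returns an integral $a^*$ with $\sum_{i,j}a^*_{ij}\,profit_i'\ge 0.5\cdot OPT(P')$, where $OPT(P')$ denotes the optimum of the surrogate.

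The second step is to re-evaluate this same rounded solution under the true objective and to relate it to the true optimum. Because $w_i-s_i\ge 0$ (a character's blank never exceeds its width) and $ratio_i\ge ratio_{\min}$, we have $profit_i\ge profit_i'$ termwise, hence $\sum_{i,j}a^*_{ij}\,profit_i\ge\sum_{i,j}a^*_{ij}\,profit_i'\ge 0.5\cdot OPT(P')$; the rounded solution is therefore at least as good in the real instance as it was in $P'$. It then remains only to compare $OPT(P')$ with the true optimum $OPT$ of (\ref{knapsack}).

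For that comparison I would take an optimal integral solution $\hat a$ of the original instance, which is automatically feasible for $P'$ since the feasible regions agree. Using the termwise bound $profit_i=(w_i-s_i)ratio_i\le (w_i-s_i)ratio_{\max}=\tfrac{1}{\alpha}profit_i'$ gives $OPT=\sum_{i,j}\hat a_{ij}\,profit_i\le\tfrac{1}{\alpha}\sum_{i,j}\hat a_{ij}\,profit_i'\le\tfrac{1}{\alpha}OPT(P')$, that is $OPT(P')\ge\alpha\cdot OPT$. Chaining this with the estimate from the previous paragraph yields $\sum_{i,j}a^*_{ij}\,profit_i\ge 0.5\cdot OPT(P')\ge 0.5\alpha\cdot OPT$, which is exactly the claimed $0.5\alpha$-approximation.

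The delicate point, and the step I expect to require the most care, is keeping the two charging directions straight: the rounded solution must be scored with the larger true profits (to push its value up), while the optimum must be scored with the smaller uniform profits (to push $OPT(P')$ up relative to $OPT$), and both moves rely on the coordinatewise inequalities $ratio_{\min}\le ratio_i\le ratio_{\max}$ together with $w_i-s_i\ge 0$. I would therefore make the nonnegativity of $w_i-s_i$ explicit and verify that lowering the ratios leaves the capacities $W-max_s$ unchanged, so that feasibility is genuinely shared between the surrogate and the original problem and the constant $0.5$ from Lemma~\ref{lem:half_bound} carries over intact.
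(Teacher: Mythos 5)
Your proof is correct and follows essentially the same route as the paper: build a surrogate instance in which every $ratio_i$ is replaced by $\min_i\{ratio_i\}$, invoke Lemma~\ref{lem:half_bound} on that uniform-ratio instance, and transfer the bound back via $OPT' \ge \alpha \cdot OPT$. You are in fact slightly more careful than the paper's own argument, which loosely says it sets each $profit_i$ to $\min\{profit_i\}$ (rather than each $ratio_i$ to $\min\{ratio_i\}$, which is what the $\alpha$-inequality actually requires) and leaves implicit your final step of re-scoring the rounded solution under the true profits.
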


\begin{proof}
First we introduce a modified formulation to program (\ref{knapsack}), where each $profit_i$ is set to min$\{profit_i\}$.
In other words, in the modified formulation, each $ratio_i$ is the same.
Let $OPT$ and $OPT'$ be the optimal values of (\ref{knapsack}) and the modified formulation, respectively.
Let $APR'$ be the corresponding LP rounding result in the modified formulation.
According to Lemma \ref{lem:half_bound}, $APR' \ge 0.5 \cdot OPT'$.
Since $\textrm{min}\{profit_i\} \ge profit_i \cdot \alpha$, we can get $OPT' \ge \alpha \cdot OPT$.
In summary, $APR' \ge 0.5 \cdot OPT' \ge 0.5\alpha \cdot OPT$.
\end{proof}

The difference between (\ref{fast}) and (\ref{knapsack}) is the right side values at (\ref{fast_a}) and (\ref{knapsack_a}).
Blank spacing is relatively small comparing with the row length, we can get that $W - max_s \approx W - B_j$.
Then we can expect that program (\ref{fast}) has a reasonable rounding performance.


\subsection{Successive Rounding}
\label{sec:succ}

In this subsection we propose a successive rounding algorithm to solve program (\ref{fast}) iteratively.
Successive rounding uses a simple iterative scheme in which fractional variables are rounded one after the other until an integral solution is found \cite{ILP_IJOC00_Johnson}.
The ILP formulation (\ref{fast}) becomes an LP if we relax the discrete constraint to a continuous constraint as:
$0 \le a_{ij} \le 1$.

\begin{algorithm}[tb!]
\caption{SuccRounding ( $th_{inv}$ )}
\label{alg:round}
\begin{algorithmic}[1]
  \Require{ILP Formulation (\ref{fast})}
  \State  Set all $a_{ij}$ as unsolved;
  \Repeat
    \State Update $profit_i$ for all unsolved $a_{ij}$;
    \State Solve relaxed LP of (\ref{fast});
    \Repeat
      \State  $a_{pq} \leftarrow$ max$\{a_{ij}$\};
      \ForAll {$a_{ij} \ge a_{pq} \times th_{inv}$ }
        \If {$c_i$ can be assigned to row $r_j$}
          \State $a_{ij} = 1$ and set it as solved;
          \State Update capacity of row $r_j$;
        \EndIf
      \EndFor
     \Until  {cannot find $a_{pq}$}
  \Until  { }
\end{algorithmic}
\end{algorithm}

The details of successive rounding is shown in Algorithm \ref{alg:round}.
At first we set all $a_{ij}$ as \textit{unsolved} since none of them is assigned to rows.
The LP is updated and solved iteratively.
For each new LP solution, we search the maximal $a_{ij}$, and store in $a_{pq}$ (line 6).
Then we find all $a_{ij}$ that is closest to the maximum value $a_{pq}$, i.e., $a_{ij} \ge a_{pq} \times th_{inv}$.
In our implementation, $th_{inv}$ is set to 0.9.
For each selected variables $a_{ij}$, we try to pack $c_i$ into row $r_j$, and set $a_{ij}$ as \textit{solved}.
Note that when one character $c_i$ is assigned to one row, all $a_{ij}$ would be set as solved.
Therefore, the variable number in updated LP formulation would continue to decrease.
This procedure repeats until no appropriate $a_{ij}$ can be found.
One key step of Algorithm \ref{alg:round} is the $profit_i$ update (line 3).
For each character $c_i$, we set its $profit_i$ as follows:
\begin{equation}
  profit_i = \sum_c \frac{t_{c}}{t_{max}} \cdot (n_i - 1) \cdot t_{ic} \label{eq:profit}
\end{equation}
where $t_c$ is current writing time of region $r_c$, and $t_{max} =$ max $\{t_c, \forall c \in P\}$.
Through applying the $profit_i$, the region $r_c$ with longer writing time would be considered more during the LP formulation. 
During successive rounding, if $c_i$ is not assigned to any row, $profit_i$ would continue to be updated, so that the total writing time of the whole MCC system can be minimized.

\subsection{Fast ILP Convergence}
\label{sec:ilp}
\begin{figure}[htb]
  \centering
  \includegraphics[width=0.44\textwidth]{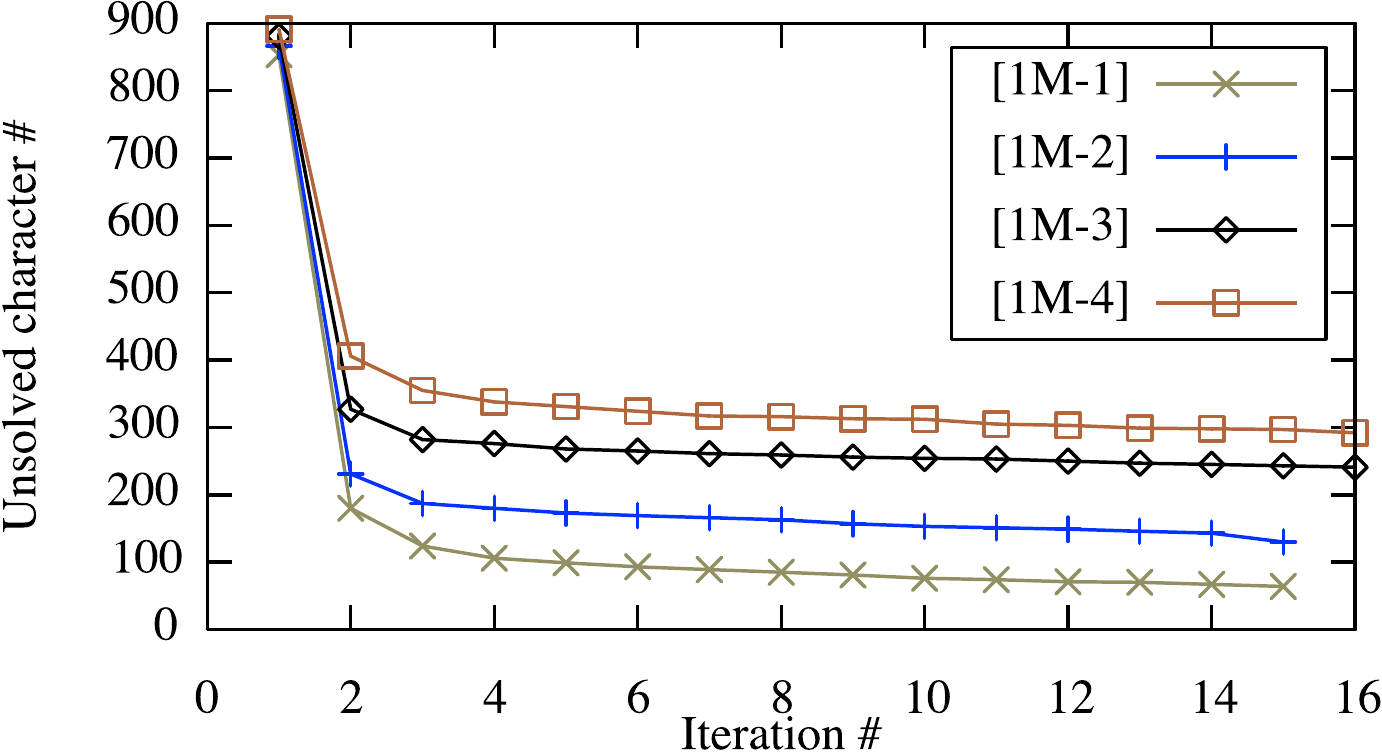}
  \caption{Unsolved character number along the LP iterations for testcases 1M-1, 1M-2, 1M-3, and 1M-4.}
  \label{fig:lp_iteration}
  \vspace{-.1in}
\end{figure}

During successive rounding, for each LP iteration, we select some characters into rows, and set these characters as solved.
In the next LP iteration, only unsolved characters would be considered in formulation. 
Thus the number of unsolved characters continues to decrease through the iterations.
For four test cases (1M-1 to 1M-4), Fig. \ref{fig:lp_iteration} illustrates the number of unsolved characters in each iteration.
We observe that in early iterations, more characters would be assigned to rows.
However, when the stencil is almost full, fewer of $a_{ij}$ could be close to $1$.
Thus,
in late iterations only few characters would be assigned into stencil, and the successive rounding requires more iterations.

\begin{algorithm}[tb!]
\caption{Fast ILP Convergence ( $L_{th}, U_{th}$ )}
\label{alg:ilp}
\begin{algorithmic}[1]
  \Require{Solutions of relaxed LP (\ref{fast});}
  \ForAll {$a_{ij}$ in relaxed LP solutions}
    \If {$a_{ij} < L_{th}$}
      \State Set $a_{ij}$ as solved;
    \EndIf
    \If {$a_{ij} > U_{th}$}
      \State Assign $c_i$ to row $r_j$;
      \State Set $a_{ij}$ as solved;
    \EndIf
  \EndFor
  \State Solve ILP formulation (\ref{fast}) for all unsolved $a_{ij}$
  \If {$a_{ij} = 1$}
    \State Assign $c_i$ to row $r_j$;
  \EndIf
\end{algorithmic}
\end{algorithm}

To overcome this limitation so that the successive rounding iteration number can be reduced,
we present a convergence technique based on fast ILP formulation.
The basic idea is that when we observe only few characters are assigned into rows in one LP iteration, we stop successive rounding in advance,
and call fast ILP convergence to assign all left characters.
Note that in \cite{EBL_ISPD2014_Kuang} an ILP formulation with similar idea was also applied.
The details of the ILP convergence is shown in Algorithm \ref{alg:ilp}.
The input are the solutions of last LP rounding, and two parameters $L_{th}$ and $U_{th}$.
First we check each $a_{ij}$ (lines 1-9).
If $a_{ij} < L_{th}$, then we assume character $c_i$ would be not assigned to row $r_j$, and set $a_{ij}$ as solved.
Similarly, if $a_{ij} > U_{th}$, we assign $c_i$ to row $r_j$ and set $a_{ij}$ as solved.
For those unsolved $a_{ij}$ we build up ILP formulation (\ref{fast}) to assign final rows (lines 10-13).

\begin{figure}[tb]
  \centering
  \includegraphics[width=0.44\textwidth]{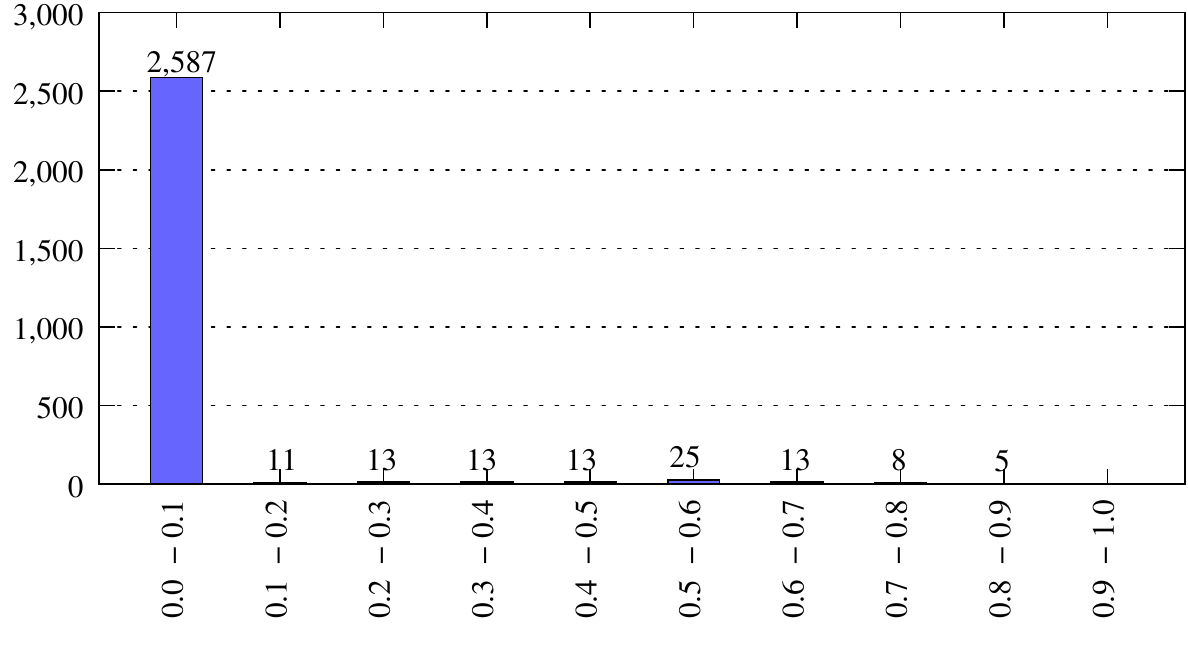}
  \vspace{-.1in}
  \caption{For test case 1M-1, solution distribution in last LP, where most of values are close to 0.}
  \label{fig:distribution_1M1}
  \vspace{-.1in}
\end{figure}

At first glance the ILP formulation may be expensive to solve.
However, we observe that in our convergence Algorithm \ref{alg:ilp}, typically the variable number is small.
Fig. \ref{fig:distribution_1M1} illustrates the solution distribution in last LP formulation.
We can see that most of the values are close to 0.
In our implementation $L_{th}$ and $U_{th}$ are set to 0.1 and 0.9, respectively.
For this case, although the LP formulation contains more than 2500 variables, our fast ILP formulation results in only 101 binary variables.

\subsection{Refinement}
\label{sec:refine}

Refinement is a stage to solve the single row ordering problem \cite{EBL_TCAD2012_Yuan}, which adjusts the relative locations of input $p$ characters to minimize the total width.
Under the S-Blank assumption, because of Lemma \ref{lem:symm}, this problem can be optimally solved through the following two-step greedy approach.
\begin{enumerate}
  \item All characters are sorted decreasingly by blanks;
  \item All characters are inserted one by one. Each one can be inserted at either left end or right end.
\end{enumerate}

\begin{figure} [tb]
  \centering
  \subfigure[]{\includegraphics[width=0.28\textwidth]{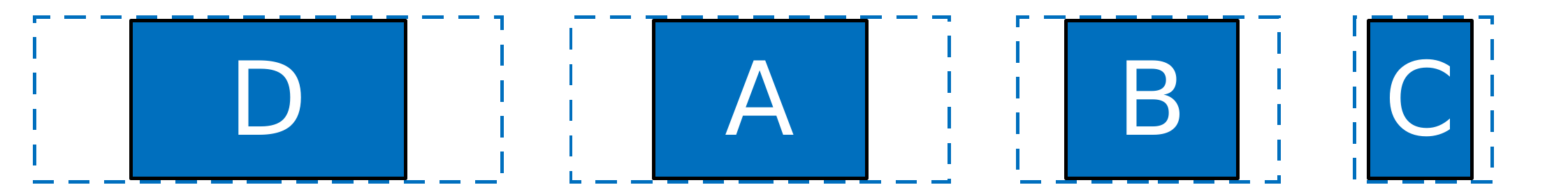}}
  \hspace{.3in}
  \subfigure[]{\includegraphics[width=0.15\textwidth]{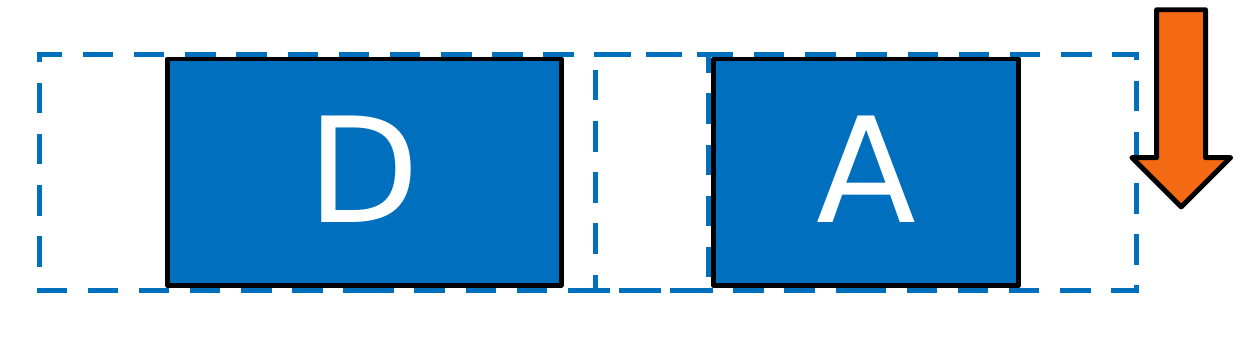}}
  \subfigure[]{\includegraphics[width=0.22\textwidth]{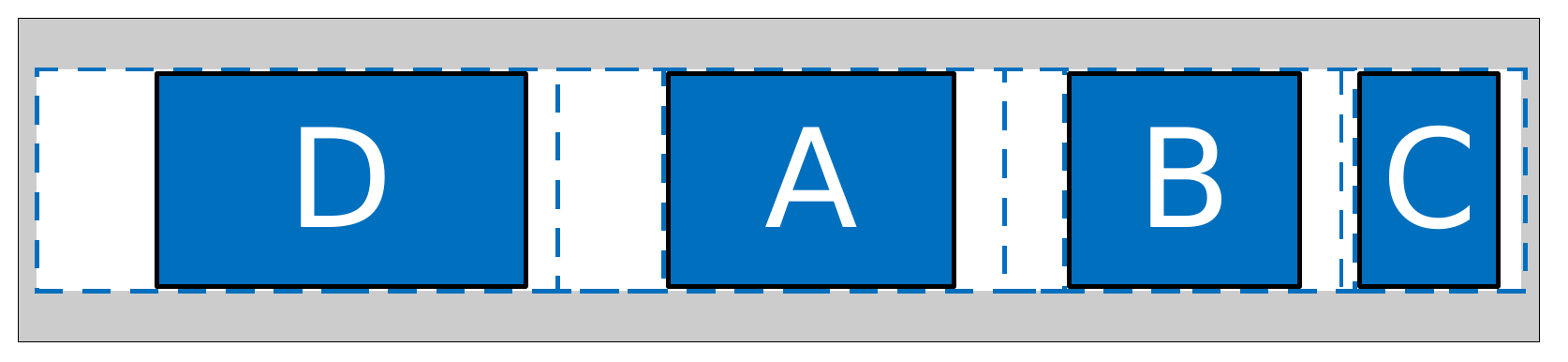}}
  \hspace{.1in}
  \subfigure[]{\includegraphics[width=0.15\textwidth]{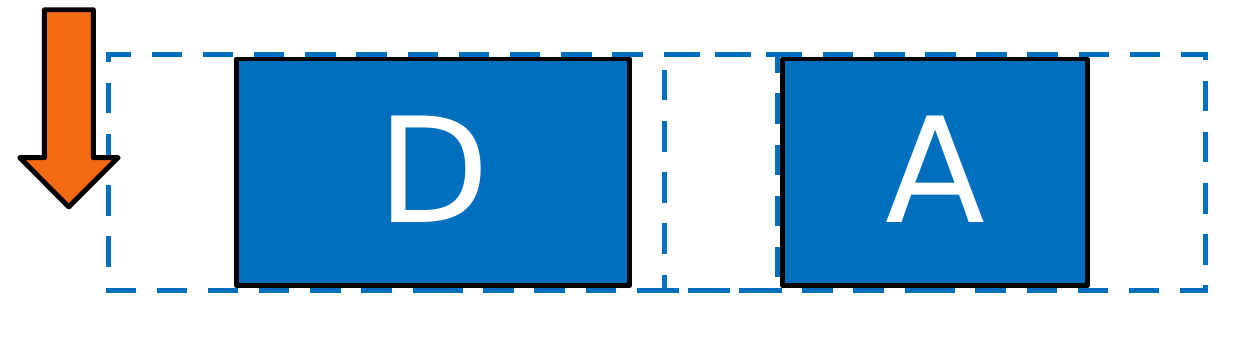}}
  \subfigure[]{\includegraphics[width=0.22\textwidth]{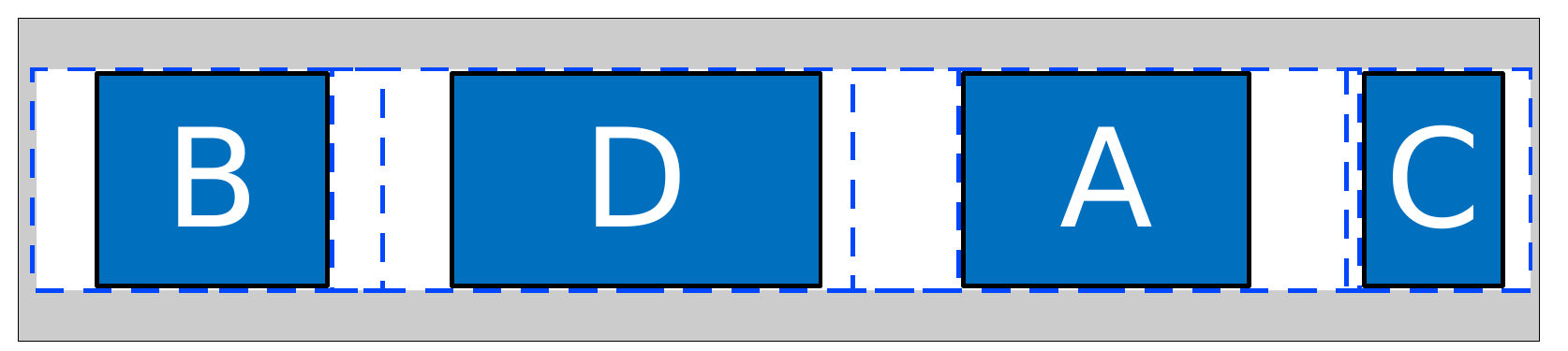}} 
  \caption{Greedy based Single Row Ordering.
  (a) At first all candidates are sorted by blank space.
  (c) One possible ordering solution where each candidate chooses the right end position.
  (e) Another possible ordering solution.
  }
  \label{fig:ordering}
  \vspace{-.1in}
\end{figure}

One example of the greedy approach is illustrated in Fig. \ref{fig:ordering},
where four character candidates $A$, $B$, $C$ and $D$ are to be ordered.
In Fig. \ref{fig:ordering}(a), they are sorted decreasingly by blank space. 
Then all the candidates are inserted one by one.
From the second candidate, each insertion has two options: left side or right side of the whole packed candidates.
For example, if $A$ is inserted at the right of $D$, $B$ has two insertion options:
one is at the right side of $A$ (Fig. \ref{fig:ordering}(b)), another is at the left side of $A$ (Fig. \ref{fig:ordering}(d)).
Given different choices of candidate $B$, Fig. \ref{fig:ordering}(c) and Fig. \ref{fig:ordering}(e) give corresponding final solutions.
Since from the second candidate each one has two choices, by this greedy approach $n$ candidates will generate $2^{n-1}$ possible solutions. 

For the asymmetrical cases, the optimality does not hold anymore.
To compensate the losing, E-BLOW consists of a refinement stage.
For $n$ characters \{$c_1,\dots,c_n$\}, single row ordering can have $n!$ possible solutions.
We avoid enumerating such huge solutions, and take advantage of the order in symmetrical blank assumption.
That is, we pick up one best solution from the $2^{n-1}$ possible ones.
Noted that although considering $2^{n-1}$ instead of $n!$ options cannot guarantee optimal single row packing,
our preliminary results show that the solution quality loss is negligible in practice.

The refinement is based on dynamic programming, and the details are shown in Algorithm \ref{alg:refine}.
Refine(k) generates all possible order solutions for the first $k$ characters \{$c_1, \dots, c_k$\}.
Each order solution is represented as a set $(w,l,r,O)$, where $w$ is the total length of the order, $l$ is the left blank of the left character,
$r$ is the right blank of the right character, and $O$ is the character order.
At the beginning, an empty solution set $S$ is initialized (line 1).
If $k = 1$, then an initial solution $(w_1, sl_1, sr_1, \{c_1\})$ would be generated (line 2).
Here $w_1, sl_1$, and $sr_1$ are width of first character $c_1$, left blank of $c_1$, and right blank of $c_1$.
If $k > 1$, then \textit{Refine}(k) will recursively call \textit{Refine}(k-1) to generate all old partial solutions.
All these partial solutions will be updated by adding candidate $c_k$ (lines 5-9).

\begin{algorithm} [htb]
\caption{Refine(k)}
\label{alg:refine}
\begin{algorithmic}[1]
  \Require{k characters \{$c_1,\dots,c_k$\};}
  \If { k = 1 }
      \State Add $(w_1, sl_1, sr_1, \{c_1\})$ into $S$;
  \Else
      \State Refine(k-1);
      \For{ each partial solution $(w, l, r, O)$}
          \State Remove $(w, l, r, O)$ from $S$;
          \State Add $(w+w_k-\textrm{min}(sr_k, l), sl_k, r, \{c_k, O\})$ into $S$;
          \State Add $(w+w_k-\textrm{min}(sl_k, r), l, sr_k, \{O, c_k\})$ into $S$;
      \EndFor
      \If {size of $S$ $\ge$ $threshold$}
          \State Prune inferior solutions in $S$;
      \EndIf
  \EndIf
\end{algorithmic}
\end{algorithm}

We propose pruning techniques to speed-up the dynamic programming process.
Let us introduce the concept of inferior solutions.
For any two solutions $S_A = (w_a, l_a, r_a, O_a)$ and $S_B = (w_b, l_b, r_b, O_b)$,
we say $S_B$ is \textbf{inferior} to $S_A$ if and only if $w_a \ge w_b$, $l_a \le l_b$ and $r_a \le r_b$.
Those inferior solutions would be pruned during pruning section (lines 10-12).
In our implementation, the $threshold$ is set to 20.


\subsection{Post-Swap and Post-Insertion}
\label{sec:post}

After refinement, a post-swap stage is applied to further improve the performance.
In each swap operation, an unselected character would be swapped with a character on stencil, if such swap can improve the writing time.
The post-swap is implemented using a greedy flavor that consists of two steps.
First, all the unselected characters are sorted.
Second, the unselected characters would try to swap with the characters on stencils one by one.

After post-swap, a post-insertion stage is applied to further insert more characters into stencil.
Different from the greedy insertion approach in \cite{EBL_TCAD2012_Yuan} that new characters can be only inserted into one row's right end.
We consider to insert characters into the middle part of rows.
Generally speaking, the character with higher profit value (\ref{eq:profit}) would have a higher priority to be inserted into rows.
We propose a character insertion algorithm to insert some additional characters into the rows.
The insertion is formulated as a maximum weighted matching problem \cite{FLOW_CSUR86_Galil},
under the constraint that for each row there is at most one character can be inserted.
Although this assumption may loss some optimality, in practical it works quite well as usually the remaining space for a row is very limited.

\begin{figure}[tb!]
  \centering
  \subfigure[]{\includegraphics[width=0.30\textwidth]{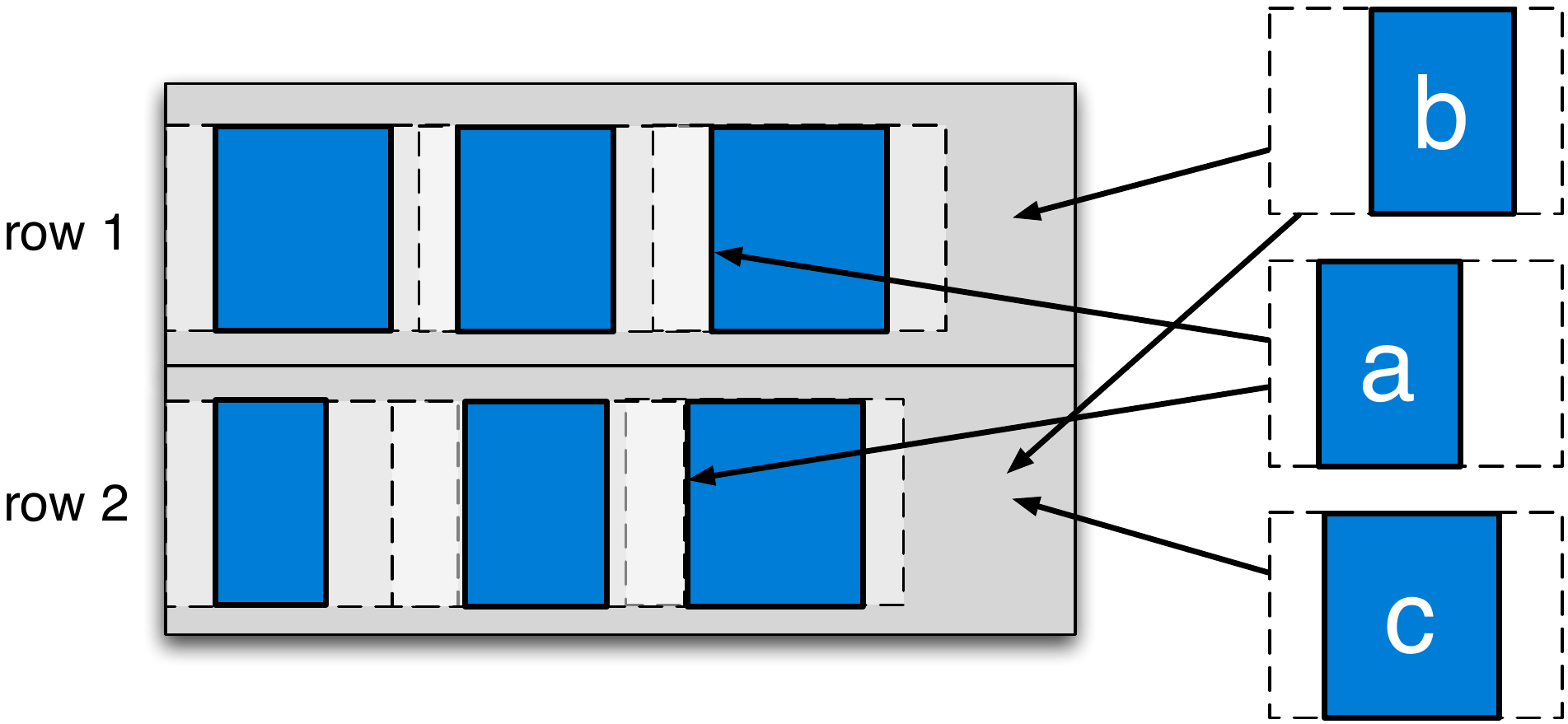}}
  \subfigure[]{\includegraphics[width=0.14\textwidth]{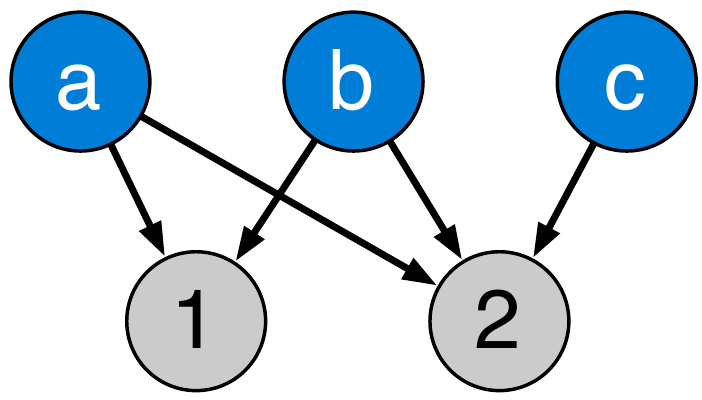}}
  \caption{Example of maximum weighted matching based post character insertion.
  (a) Three additional characters $a, b, c$ and two rows.
  (b) Corresponding bipartite graph to represent the relationships among characters and rows.
  }
  \label{fig:post}
\end{figure}

Fig. \ref{fig:post} illustrates one example of the character insertion.
As shown in Fig. \ref{fig:post} (a), there are two rows (row 1, row 2) and three additional characters ($a, b, c$).
Characters $a$ and $b$ can be inserted into either row 1 or row 2, but character $c$ can only be inserted into row 2.
It shall be noted that the insertion position is labeled by arrows.
For example, two arrows from character $a$ mean that $a$ can be inserted into the middle of each row.
We build up a bipartite graph to represent the relationships among characters and rows (see Fig. \ref{fig:post} (b)).
Each edge is associated with a cost as character's profit.
By utilizing the bipartite graph, the best character insertion can be solved by finding a maximum weighted matching.

Given $n$ additional characters, we search the possible insertion positions under each row.
The time complexity of searching all the possibilities is $O(nmC)$,
where $m$ is the total row number and $C$ is the maximum character number on each row.
We propose two heuristics to speed-up the search process.
First, to reduce $n$, we only consider those additional characters with high profits.
Second, to reduce $m$, we skip those rows with very little empty space.

\vspace{.1in}
\section{E-BLOW for $\mathsf{2DOSP}$}
\label{sec:2d}

Now we consider a more general case: the blank spaces of characters are non-uniform along both horizontal and vertical directions. 
This problem is referred to $\mathsf{2DOSP}$ problem.
In \cite{EBL_TCAD2012_Yuan} the $\mathsf{2DOSP}$ problem was transformed into a floorplanning problem.
However, several key differences between traditional floorplanning and $\mathsf{OSP}$ were ignored.
(1) In $\mathsf{OSP}$ there is no wirelength to be considered, while at floorplanning wirelength is a major optimization objective.
(2) Compared with complex IP cores, lots of characters may have similar sizes.
(3) Traditional floorplanner could not handle the problem size of modern MCC design.

\subsection{ILP Formulation}
\label{sec:2d-ilp}

\begin{table}[bt]
 \centering 
 \caption{Notations used in 2D-ILP Formulation}
 \label{tab:notations2}
\resizebox{8.4cm}{!} {
 \begin{tabular}{|l|l|}
 \hline
 \hline
 $W (H)$                  & width (height) constraint of stencil \\
 \hline
 $w_i (h_i)$              & width (height) of candidate $c_i$ \\
 \hline
 $o_{ij}^h (o_{ij}^v)$    & horizontal (vertical) overlap between $c_i$ and $c_j$ \\
 \hline
 $w_{ij} (h_{ij})$        & $w_{ij} = w_i - o_{ij}^h$, $h_{ij} = h_i - o_{ij}^v$\\
 \hline
 $a_i$                    & 0-1 variable, $a_i = 1$ if $c_i$ is on stencil\\
\hline
\hline
\end{tabular}
}
\end{table}

Here we will show that $\mathsf{2DOSP}$ can be formulated as integer linear programming (ILP) as well.
Compared with $\mathsf{1DOSP}$, $\mathsf{2DOSP}$ is more general: the blank spaces of characters are non-uniform along both horizontal and vertical directions. 
The $\mathsf{2DOSP}$ problem can be also formulated as an ILP formulation (\ref{eq:2ilp}).
For convenience, Table \ref{tab:notations2} lists some notations used in the ILP formulation.
The formulation is motivated by \cite{FLOOR_TCAD91_Sutanthavibul}, but the difference is that our formulation can optimize both placement constraints and character selection, simultaneously.
{
\small
\begin{align}
  \textrm{min} \ & T_{total}  		 & \label{eq:2ilp}\\
  \textrm{s.t.}\
  & T_{total} \ge T_{c}^{VSB} - \sum_{i=1}^{n} R_{ic} \cdot a_i        & \forall c \in P    \label{2ilp_a}\tag{\ref{eq:2ilp}$a$}\\
  &   x_i + w_{ij} \le x_j + W (2+p_{ij}+q_{ij}-a_i-a_j)	           & \forall i,j        \label{2ilp_b}\tag{\ref{eq:2ilp}$b$}\\
  &   x_i - w_{ji} \ge x_j - W (3+p_{ij}-q_{ij}-a_i-a_j)	           & \forall i,j        \label{2ilp_c}\tag{\ref{eq:2ilp}$c$}\\
  &   y_i + h_{ij} \le y_j + H (3-p_{ij}+q_{ij}-a_i-a_j)	           & \forall i,j        \label{2ilp_d}\tag{\ref{eq:2ilp}$d$}\\
  &   y_i - h_{ji} \ge y_j - H (4-p_{ij}-q_{ij}-a_i-a_j)	           & \forall i,j        \label{2ilp_e}\tag{\ref{eq:2ilp}$e$}\\
  &   0 \le x_i + w_i \le W, \ \  0 \le y_i + h_i \le H				   & \forall i          \label{2ilp_f}\tag{\ref{eq:2ilp}$f$}\\
  &   p_{ij}, q_{ij}, a_i: \textrm{0-1 variable} 	                   & \forall i,j        \label{2ilp_g}\tag{\ref{eq:2ilp}$g$}
\end{align}
}
where $a_i$ indicates whether candidate $c_i$ is on the stencil, 
$p_{ij}$ and $q_{ij}$ represent the location relationships between $c_i$ and $c_j$. 
The number of variables is $O(n^2)$, where $n$ is number of characters.
We can see that if $a_i = 0$, constraints (\ref{2ilp_b}) - (\ref{2ilp_e}) are not active.
Besides, it is easy to see that when $a_i = a_j = 1$, for each of the four possible choices of $(p_{ij}, q_{ij}) = (0,0), (0,1), (1,0), (1,1)$, only one of the four inequalities (\ref{2ilp_b}) - (\ref{2ilp_e}) are active.
For example, with ($a_i, a_j, p_{ij}, q_{ij}$) = (1,1,1,1), only the constraint (\ref{2ilp_e}) applies,
which allows character $c_i$ to be anywhere above character $c_j$.
The other three constraints (\ref{2ilp_b})-(\ref{2ilp_d}) are always satisfied for any permitted values of ($x_i, y_i$) and ($x_j, y_j$).

Program (\ref{eq:2ilp}) can be relaxed to linear programming (LP) by replacing constraint (\ref{2ilp_g}) as:
\begin{displaymath}
0 \le p_{ij}, q_{ij}, a_i \le 1
\end{displaymath}
However, similar to the discussion in $\mathsf{1DOSP}$, the relaxed LP solution provides no information or guideline to the packing, i.e., every $a_i$ is set as $1$, and every $p_{ij}$ is set as $0.5$.
In other words, this LP relaxation provides no useful information to guide future rounding:
all the character candidates are selected and no ordering relationship is determined.
Therefore we can see that LP rounding method cannot be effectively applied to program (\ref{eq:2ilp}).



\subsection{Clustering based Simulated Annealing}

\begin{figure} [tb]
  \centering
  \includegraphics[width=0.34\textwidth]{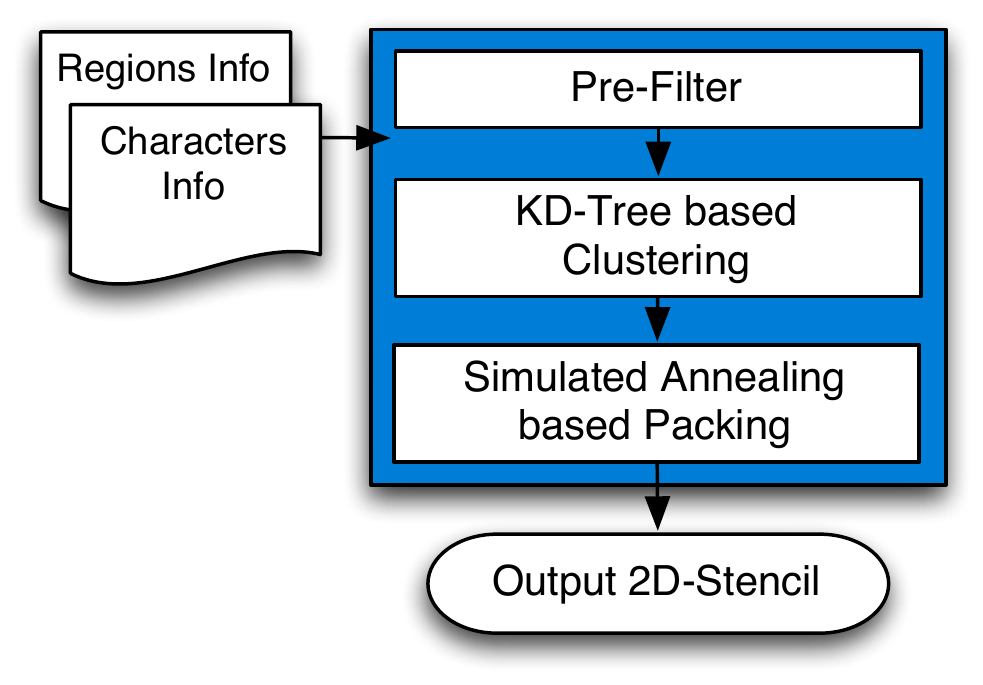}
  \caption{E-BLOW overall flow for $\mathsf{2DOSP}$.}
  \label{fig:Flow2D}
\end{figure}

To deal with all these limitations of ILP formulation, an fast packing framework is proposed (see Fig. \ref{fig:Flow2D}).
Given the input character candidates, the pre-filter process is first applied to remove characters with bad profit (defined in (\ref{eq:profit})).
Then the second step is a clustering algorithm to effectively speed-up the design process.
Followed by the final floorplanner to pack all candidates.

Clustering is a well studied problem, and there are many of works and applications in VLSI
\cite{PAR_JVLSI95_Alpert}
However, previous methodologies cannot be directly applied here.
First, traditional clustering is based on netlist, which provides the all clustering options.
Generally speaking, netlist is sparse, but in $\mathsf{OSP}$ the connection relationships are so complex that any two characters can be clustered, and totally there are $O(n^2)$ clustering options.
Second, given two candidates $c_i$ and $c_j$, there are several clustering options.
For example, horizontal clustering and vertical clustering may have different overlapping blank space.

The main ideas of our clustering are iteratively search and group each character pair ($c_i, c_j$) with similar blank spaces, profits, and sizes.
Character $c_i$ is said to be similar to $c_j$, if the following condition is satisfied:
\begin{equation}
  \left\{
  \begin{array}{c}
  \textrm{max} \{ {|w_i-w_j|}/{w_j},    {|h_i-h_j|}/{h_j}       \} \le bound \\
  \textrm{max} \{ {|sh_i-sh_j|}/{sh_j}, {|sv_i - sv_j|}/{sv_j}  \} \le bound \\
  {|profit_i-profit_j|}/{profit_j}                                 \le bound
  \end{array}
  \right.
\end{equation}
where $w_i$ and $h_i$ are the width and height of $c_i$.
$sh_i$ and $sv_i$ are the horizontal blank space and vertical blank space of $c_i$, respectively.
In our implementation, $bound$ is set as 0.2.
We can see that in clustering, all the size, blanks, and profits are considered.

The details of our clustering procedure are shown in Algorithm \ref{alg:cluster}.
First all the initial character candidates are sorted by $profit_i$ (line 2),
so those characters with more shot number reduction are tend to be clustered.
Then all characters are labeled as \textit{unclustered} (line 3).
The clustering (lines 3-10) is repeated until no characters can be further merged.
When cluster $c_i, c_j$, the information of $c_i$ is modified to incorporate $c_j$, and the $c_j$ is labeled as \textit{clustered}.

\begin{algorithm}[htb]
\caption{KD-Tree based Clustering}
\label{alg:cluster}
\begin{algorithmic}[1]
  \Require{set of character candidates.}
  \State  {Sort all candidates by $profit_i$;}
  \State  {Set each candidates $c_i$ to \textit{unclustered};}
  \Repeat
      \ForAll {unclustered candidate $c_i$}
          \If {can find similar unclustered character $c_j$}
              \State Update information of $c_i$ to incorporate $c_j$;
              \State Label $c_j$ as \textit{clustered};
          \EndIf
      \EndFor
  \Until {no character can be merged}
\end{algorithmic}
\end{algorithm}

For each candidate $c_i$, finding available $c_j$ may need $O(n)$, and complexity of the horizontal clustering and vertical clustering are both $O(n^2)$.
Then the complexity of the whole procedure is $O(n^2)$, where $n$ is the number of candidates.

A KD-Tree~\cite{KDTree_CACM75_Bentley} is used to speed-up the process of finding available pair $(c_i, c_j)$.
It provides fast $O(log n)$ region searching operations which keeping the time for insertion and deletion small:
insertion, $O(log n)$; deletion of the root, $O(n(k-1)/k)$; deletion of a random node, $O(log n)$.
Using KD-Tree, the complexity of the Algorithm \ref{alg:cluster} can be reduced to $O(nlog n)$.
For instance, given nine character candidates \{$c_1, \dots, c_9$\} as in Fig. \ref{fig:kdtree} (a),
the corresponding KD-Tree is shown in Fig. \ref{fig:kdtree} (b).
Note that KD-Tree can store multiple dimensional vertices, thus a single tree is enough to store all the information regarding width, height, blank spaces, and profits.
For the sake of convenience, here characters are distributed only based on horizontal and vertical blank spaces.
Thus only two dimensional space is illustrated in Fig. \ref{fig:kdtree} (a).
To search candidates with similar blank space with $c_2$ (see the shaded region of Fig. \ref{fig:kdtree} (a)),
it may need $O(n)$ time to scan all candidates, where $n$ is the total candidate number.
However, under the KD-Tree structure, this search procedure can be resolved in $O(log n)$.
All candidates scanned ($c_1-c_5$) are illustrated in Fig. \ref{fig:kdtree} (b).
Particularly, after scanning the $c_5$, since $c_5$ is out of the search range, we can make sure the whole sub-tree rooted by $c_7$ is out of the search range as well.

\begin{figure} [tb]
  \centering
  \subfigure[]{\includegraphics[width=0.20\textwidth]{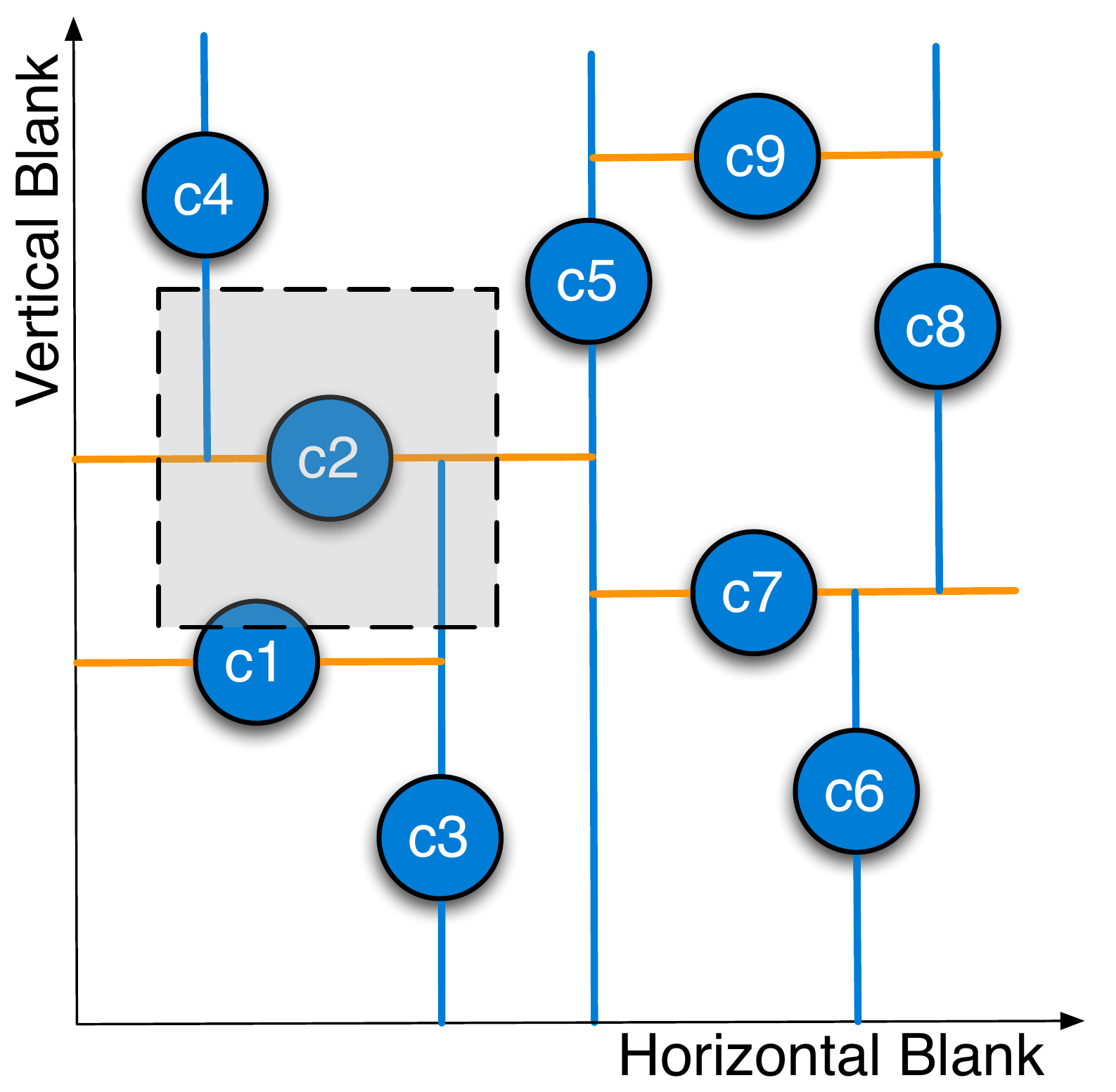}}
  \subfigure[]{\includegraphics[width=0.24\textwidth]{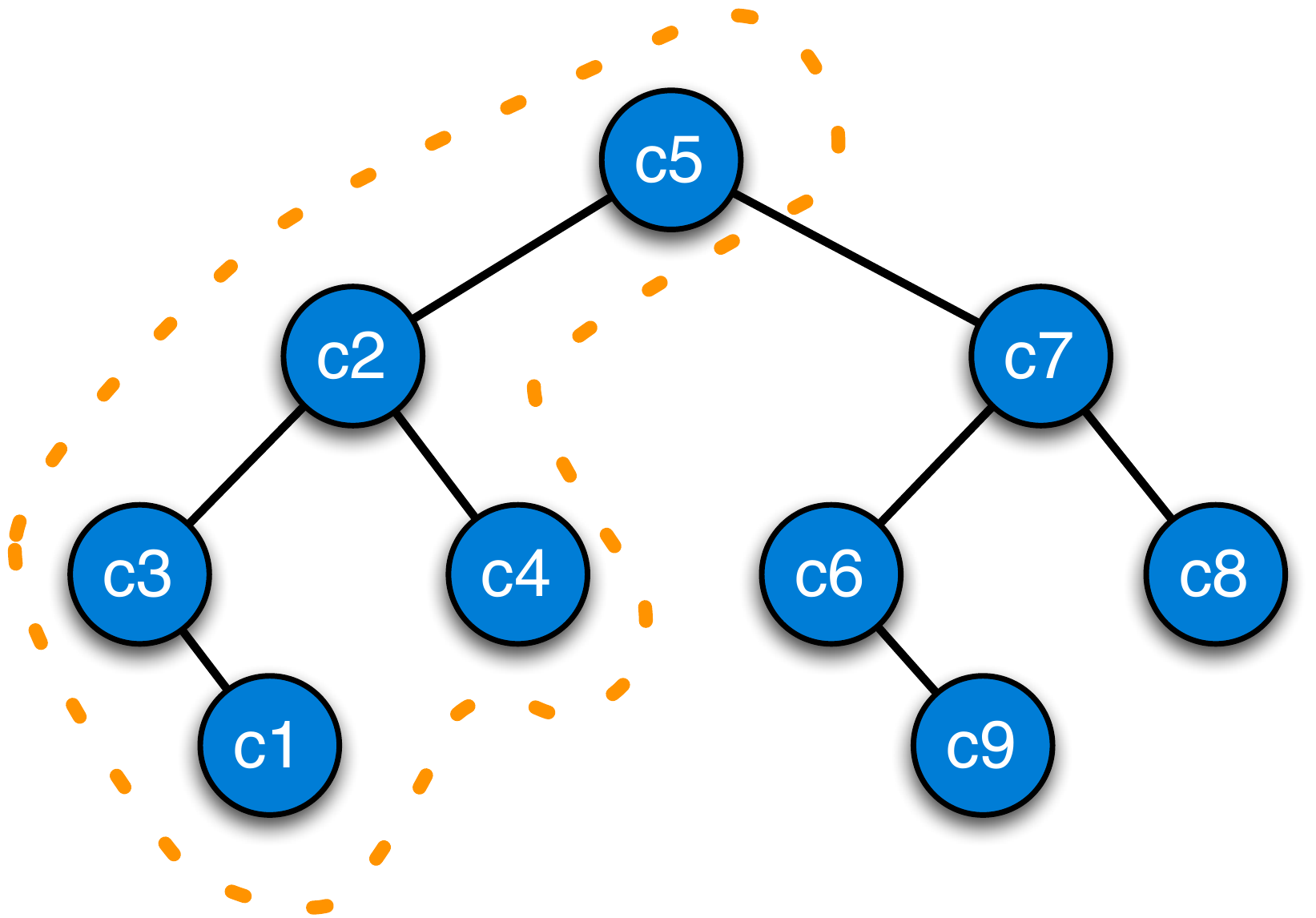}}
  \vspace{-.1in}
  \caption{KD-Tree based region searching. (a) A two dimensional space split by eight points; (b) The corresponding two dimensional KD-Tree.}
  \label{fig:kdtree}
\end{figure}

In \cite{EBL_TCAD2012_Yuan}, the $\mathsf{2DOSP}$ is transformed into a fixed-outline floorplanning problem.
If a character candidate is outside the fixed-outline, then the character would not be prepared on stencil.
Otherwise, the character candidate would be selected and packed on stencil.
Parquet \cite{FLOOR_TVLSI03_Adya} was adopted as simulated annealing engine,
and Sequence Pair \cite{FLOOR_TCAD96_SP} was used as a topology representation.
In E-BLOW we apply a simulated annealing based framework similar to that in \cite{EBL_TCAD2012_Yuan}.
To demonstrate the effectiveness of our pre-filter and clustering methodologies, E-BLOW uses the same parameters.

\section{Experimental Results}
\label{sec:result}

\begin{table*}[tb]
\centering 
\caption{Result Comparison for $\mathsf{1DOSP}$}
\label{table:1d}
\resizebox{18.4cm}{!} {
\resizebox{17.84cm}{!} {
\begin{tabular}{|r|r|r|r|r|r|r|r|r|r|r|r|r|r|r|}
\hline 
\hline 
 &char &CP & \multicolumn{3}{c|}{Greedy in \cite{EBL_TCAD2012_Yuan}} & \multicolumn{3}{c|}{\cite{EBL_TCAD2012_Yuan}}
           & \multicolumn{3}{c|}{\cite{EBL_ISPD2014_Kuang}}          & \multicolumn{3}{c|}{E-BLOW}\\
 \cline{4-15} & \# & \#  &T &char\# &CPU(s)  &T &char\# &CPU(s)        &T &char\# &CPU(s)  &T &char\# &CPU(s)\\
 \hline
 1D-1 &1000 &1 	  &64891  &912   &0.1   &50809     &926     &13.5      &\textbf{19095}   &940   &0.005    &       {19479}    &940    &2.1    \\
 1D-2 &1000 &1 	  &99381  &884   &0.1   &93465     &854     &11.8      &35295            &864   &0.005    &\textbf{34974}    &866    &1.7    \\
 1D-3 &1000 &1 	  &165480 &748   &0.1   &152376    &749     &9.13      &69301            &757   &0.005    &       {67209}    &766    &1.7    \\
 1D-4 &1000 &1 	  &193881 &691   &0.1   &193494    &687     &7.7       &\textbf{92523}   &703   &0.005    &\textbf{93816}    &703    &4.5    \\
 1M-1 &1000 &10   &63811  &912   &0.1   &53333     &926     &13.5      &39026            &938   &0.01     &\textbf{37848}    &944    &3.8    \\
 1M-2 &1000 &10   &104877 &884   &0.1   &95963     &854     &11.8      &77997            &864   &0.01     &\textbf{75303}    &874    &3.5    \\
 1M-3 &1000 &10   &172834 &748   &0.1   &156700    &749     &9.2       &138256           &758   &0.56     &\textbf{132773}   &774    &9.3    \\
 1M-4 &1000 &10   &200498 &691   &0.1   &196686    &687     &7.7       &176228           &698   &0.36     &\textbf{173193}   &711    &7.4    \\
 1M-5 &4000 &10   &274992 &3604  &1.0   &255208    &3629    &1477.3    &204114           &3660  &0.03     &\textbf{202401}   &3680   &37.9   \\
 1M-6 &4000 &10   &437088 &3341  &1.0   &417456    &3346    &1182      &357829           &3382  &0.03     &\textbf{348007}   &3420   &48.4   \\
 1M-7 &4000 &10   &650419 &3000  &1.0   &644288    &2986    &876       &568339           &3016  &0.59     &\textbf{563054}   &3064   &54.0   \\
 1M-8 &4000 &10   &820013 &2756  &1.0   &809721    &2734    &730.7     &731483           &2760  &0.42     &\textbf{721149}   &2818   &54.7   \\
 \hline
 Avg. & - & - 	  &270680.4&1597.6&0.4  &259958.3  &1594.0  &362.5     &209123.8&1611.7&0.17              &205767.2          &1630.7 &16.6  \\
 Ratio &-&-
 &\textbf{1.32}  &0.98   &0.02
 &\textbf{1.26}  &0.98   &19.01
 &\textbf{1.02}  &0.99   &0.01
 &\textbf{1.0}   &1.0    &1.0     \\
\hline \hline
\end{tabular}
}
}
\end{table*}

E-BLOW is implemented in C++ programming language and executed on a Linux machine with two 3.0GHz CPU and 32GB Memory.
GUROBI~\cite{TOOL_gurobi} is used to solve ILP/LP. 
The benchmark suite from \cite{EBL_TCAD2012_Yuan} are tested (1D-1, $\dots$, 1D-4, 2D-1, $\dots$, 2D-4).
To evaluate the algorithms for MCC system,
eight benchmarks (1M-x) are generated for $\mathsf{1DOSP}$ and the other eight (2M-x) are generated for the $\mathsf{2DOSP}$ problem.
In these new benchmarks, character projection (CP) number are all set to 10.
For each small case (1M-1, $\dots$, 1M-4, 2M-1, $\dots$, 2M-4) the character candidate number is 1000,
and the stencil size is set to $1000\mu m \times 1000 \mu m$.
For each larger case (1M-5 , $\dots$, 1M-8, 2M-5, $\dots$, 2M-8) the character candidate number is 4000,
and the stencil size is set to $2000\mu m \times 2000 \mu m$.
The size and the blank width of each character are similar to those in \cite{EBL_TCAD2012_Yuan}.
It shall be noted that \cite{EBL_TCAD2012_Yuan} is aimed for single CP system,
for MCC system it is modified to optimize the total writing time of all the regions.

\subsection{Comparison for $\mathsf{1DOSP}$}

For $\mathsf{1DOSP}$, Table \ref{table:1d} compares E-BLOW with the greedy method in \cite{EBL_TCAD2012_Yuan}, the heuristic framework in \cite{EBL_TCAD2012_Yuan}, and the algorithms in \cite{EBL_ISPD2014_Kuang}.
We have obtained the programs of \cite{EBL_TCAD2012_Yuan} and executed them in our machine.
The results of \cite{EBL_ISPD2014_Kuang} are directly from their paper.
Column  ``char \#'' is number of character candidates, and column ``CP\#'' is number of character projections.
For each algorithm, we report \textbf{``T'', ``char\#'' and ``CPU(s)''},
where ``T'' is the writing time of the E-Beam system, ``char\#'' is the character number on final stencil, and ``CPU(s)'' reports the runtime.
From Table \ref{table:1d} we can see E-BLOW achieves better performance than both greedy method and heuristic method in \cite{EBL_TCAD2012_Yuan}.
Compared with E-BLOW, the greedy method has $32\%$ more system writing time, while \cite{EBL_TCAD2012_Yuan} introduces $27\%$ more system writing time.
One possible reason is that different from the greedy/heuristic methods, E-BLOW proposes mathematical formulations to provide global view.
Additionally, due to the successive rounding scheme, E-BLOW is around 22$\times$ faster than the work in \cite{EBL_TCAD2012_Yuan}.

E-BLOW is further compared with one recent $\mathsf{1DOSP}$ solver \cite{EBL_ISPD2014_Kuang} in Table \ref{table:1d}.
E-BLOW found stencil placements with best E-Beam system writing time for 10 out of 12 test cases.
In addition, for all the MCC system cases (1M-1, $\dots$, 1M-8) E-BLOW outperforms \cite{EBL_ISPD2014_Kuang}.
One possible reason is that to optimize the overall throughput of the MCC system, a global view is necessary to balance the throughputs among different regions.
E-BLOW utilizes the mathematical formulations to provide such global optimization.
Although the linear programming solvers are more expensive than the deterministic heuristics in \cite{EBL_ISPD2014_Kuang},
the runtime of E-BLOW is reasonable that each case can be finished in 20 seconds on average.

We further demonstrate the effectiveness of the fast ILP convergence (Section \ref{sec:ilp}) and post-insertion (Section \ref{sec:post}).
We denote \textbf{E-BLOW-0} as E-BLOW without these two techniques, and denote \textbf{E-BLOW-1} as E-BLOW with these techniques.
Fig. \ref{fig:post_shot} and Fig. \ref{fig:post_runtime} compare E-BLOW-0 and E-BLOW-1, 
in terms of system writing time and runtime, respectively.
From Fig. \ref{fig:post_shot} we can see that applying fast ILP convergence and post-insertion can effectively E-Beam system throughput,
that is, averagely 9\% system writing time reduction can be achieved.
In addition, Fig. \ref{fig:post_runtime} demonstrates the performance of the fast ILP convergence (see Section \ref{sec:ilp}).
We can see that in 11 out of 12 test cases, the fast ILP convergence can effectively reduce E-BLOW CPU time.
The possible reason for the slow down in case 1D-4 is that when fast convergence is called, if there are still many unsolved $a_{ij}$ variables, ILP solver may suffer from runtime overhead problem.
However, if more successive rounding iterations are applied before ILP convergence, less runtime can be reported.

\begin{figure}[htb]
  \centering
  \includegraphics[width=0.44\textwidth]{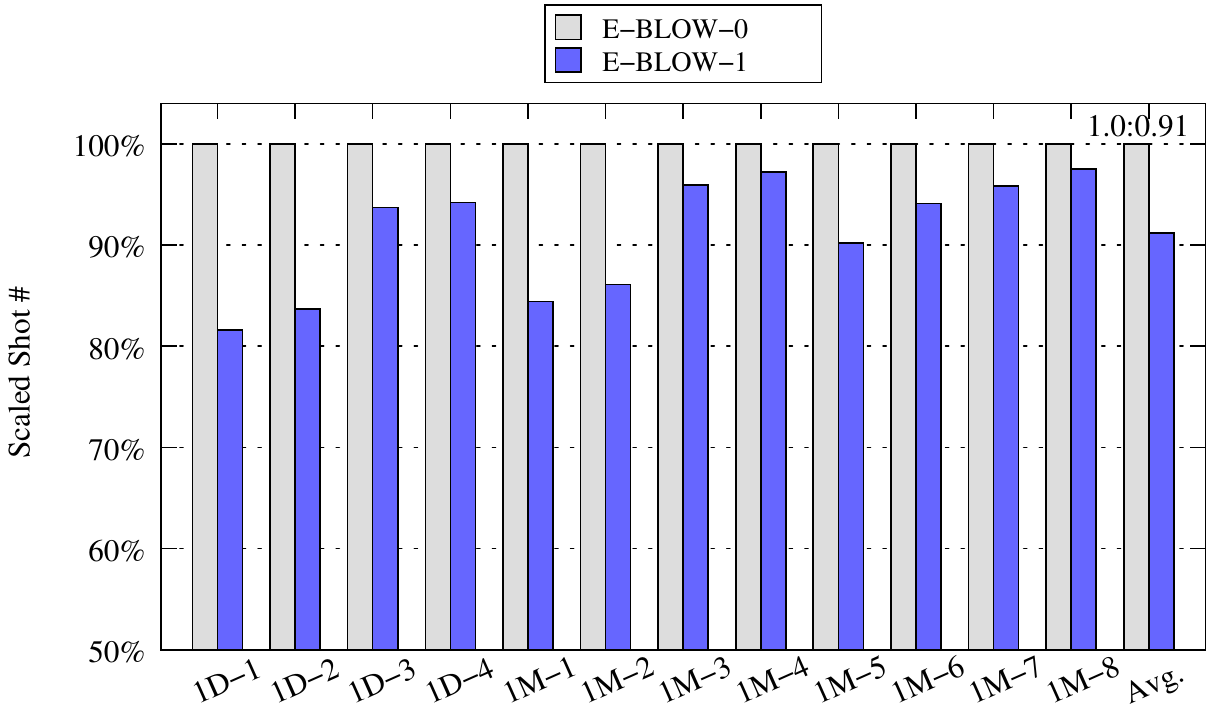}
  \caption{The comparison of E-Beam system writing times between E-BLOW-0 and E-BLOW-1.}
  \label{fig:post_shot}
  \vspace{-.2in}
\end{figure}

\begin{figure}[htb]
  \centering
  \includegraphics[width=0.44\textwidth]{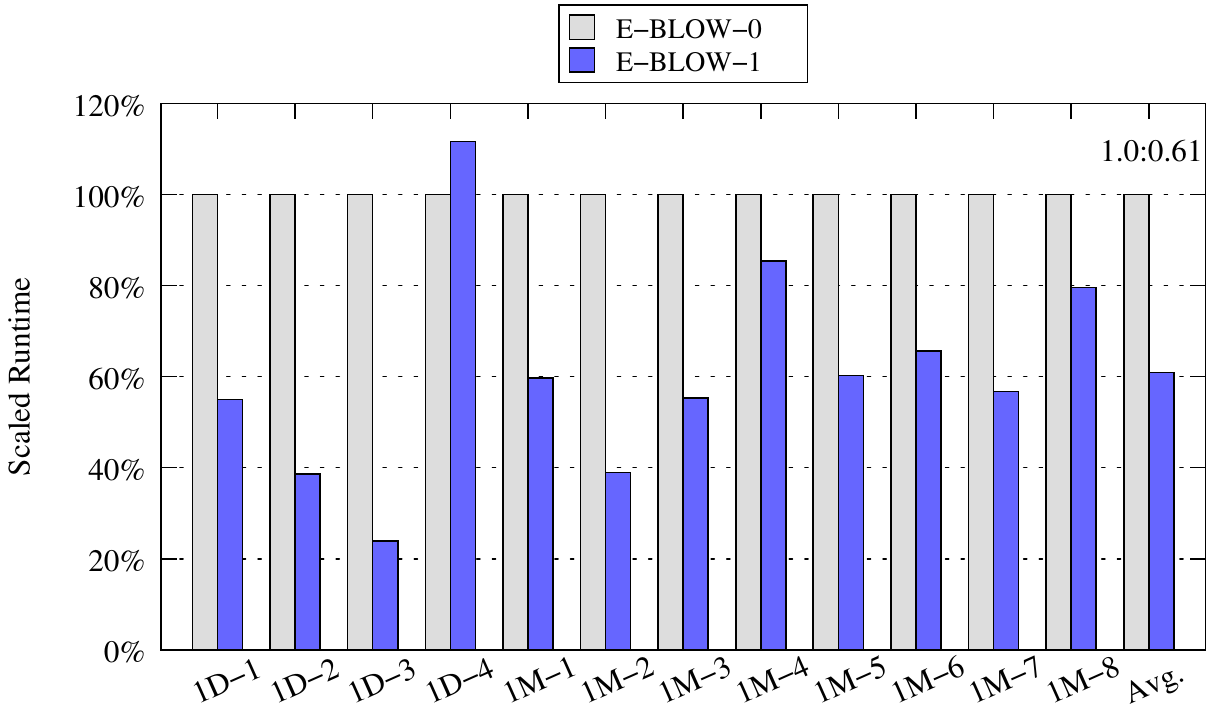}
  \caption{The comparison of runtime between E-BLOW-0 and E-BLOW-1.}
  \label{fig:post_runtime}
  \vspace{-.2in}
\end{figure}

\begin{table*}[tb]
\centering 
\caption{Result Comparison for $\mathsf{2DOSP}$}
\label{table:2d}
\resizebox{14.4cm}{!} {
\begin{tabular}{|r|r|r|r|r|r|r|r|r|r|r|r|}
\hline 
\hline 
 &char &CP & \multicolumn{3}{c|}{Greedy in \cite{EBL_TCAD2012_Yuan}} & \multicolumn{3}{c|}{\cite{EBL_TCAD2012_Yuan}}&\multicolumn{3}{c|}{E-BLOW}\\
 \cline{4-12} & \# & \#  &T & char \# & CPU(s) &T & char \# & CPU(s) &T & char \# & CPU(s)\\
 \hline
 2D-1 &1000 &1    &159654  &734  &2.1  	  &107876  &826  &329.6 	&105723  &789  &65.5 \\
 2D-2 &1000 &1    &269940  &576  &2.4 	  &166524  &741  &278.1 	&170934  &657  &52.5 \\
 2D-3 &1000 &1    &290068  &551  &2.6  	  &210496  &686  &296.7 	&178777  &663  &56.4 \\
 2D-4 &1000 &1    &327890  &499  &2.7  	  &240971  &632  &301.7 	&179981  &605  &54.7 \\
 2M-1 &1000 &1    &168279  &734  &2.1  	  &122017  &811  &313.7 	&91193   &777  &58.6 \\
 2M-2 &1000 &1    &283702  &576  &2.4 	  &187235  &728  &286.1 	&163327  &661  &48.7 \\
 2M-3 &1000 &1    &298813  &551  &2.6  	  &235788  &653  &289.0   &162648  &659  &52.3 \\
 2M-4 &1000 &1    &338610  &499  &2.7  	  &270384  &605  &285.6 	&195469  &590  &53.3 \\
 2M-5 &4000 &10   &824060  &2704 &19.0	  &700414  &2913 &3891.0  &687287  &2853 &59.0 \\
 2M-6 &4000 &10   &1044161 &2388 &20.2	  &898530  &2624 &4245.0  &717236  &2721 &60.7 \\
 2M-7 &4000 &10   &1264748 &2101 &21.9 	  &1064789 &2410 &3925.5	&921867  &2409 &57.1 \\
 2M-8 &4000 &10   &1331457 &2011 &22.8 	  &1176700 &2259 &4550.0	&1104724 &2119 &57.7 \\
 \hline
 Avg. & -   & -   &550115  &1218.1&8.3    &448477  &1324 &1582.7    &389930.5&1291.9&56.375\\
 Ratio&-    & -   &\textbf{1.41}    &0.94  &\textbf{0.15}   &\textbf{1.15}    &1.02 &\textbf{28.1}      &\textbf{1.0}       &1.0     &\textbf{1.0}   \\
\hline \hline
\end{tabular}
}
\end{table*}

\subsection{Comparison for $\mathsf{2DOSP}$}
For $\mathsf{2DOSP}$, Table \ref{table:2d} gives the similar comparison.
For each algorithm, we also record ``T'', ``char \#'' and ``CPU(s)'',
where the meanings are the same with that in Table \ref{table:1d}.
Compared with E-BLOW, although the greedy algorithm is faster, its design results  would introduce 41\% more system writing time.
Furthermore, compared with E-BLOW, although the framework in \cite{EBL_TCAD2012_Yuan} puts 2\% characters onto stencil,
it gets 15\% more system writing time.
The possible reason is that in E-BLOW the characters with similar writing time are clustered together.
The clustering method can help to speed-up the packaging, so E-BLOW is $28\times$ faster than \cite{EBL_TCAD2012_Yuan}.
In addition, after clustering the character number can be reduced.
With smaller solution space, the simulated annealing engine is easier to achieve a better solution, in terms of system writing time.

From both tables we can see that compared with \cite{EBL_TCAD2012_Yuan}, E-BLOW can achieve a better tradeoff between runtime and system throughput.

\begin{table*}[tb]
\centering
\caption{ILP v.s. EBLOW}
\label{table:small}
\resizebox{10.4cm}{!} {
\begin{tabular}{|r|r|r|r|r|r|r|r|r|}
  \hline \hline
  & candidate\# &\multicolumn{4}{c|}{ILP}	& \multicolumn{3}{c|}{E-BLOW}\\
  \cline{3-9}
  &       &binary\#	&T	&char\#	&CPU(s)	  &T	&char\#  &CPU(s)\\
	\hline
  1T-1    &8   &64    &434   &6   &0.5        &434   &6 &0.1  \\
  1T-2    &10  &100   &1034  &6   &26.1       &1034  &6 &0.2  \\
  1T-3    &11  &121   &1222  &6   &58.3       &1222  &6 &0.2  \\
  1T-4    &12  &144   &1862  &6   &1510.4     &1862  &6 &0.2  \\
  1T-5    &14  &196   &NA    &NA  &$>$3600    &2758  &6 &0.1  \\
	\hline
  2T-1    &6   &66    &60    &6   &37.3       &207   &5 &0.1  \\
  2T-2    &8   &120   &354   &6   &40.2       &653   &7 &0.1  \\
  2T-3    &10  &190   &1050  &6   &436.8      &4057  &4 &0.1  \\
  2T-4    &12  &276   &NA    &NA  &$>$3600    &4208  &5 &0.2  \\
	\hline \hline
\end{tabular}
}
\end{table*}

\subsection{E-BLOW vs. ILP}
\label{sec:vs-ilp}

We further compare the E-BLOW with the ILP formulations (\ref{eq:1ilp}) and (\ref{eq:2ilp}).
Although for both $\mathsf{OSP}$ problems the ILP formulations can find optimal solutions theoretically, they may suffer from runtime overhead.
Therefore, we randomly generate nine small benchmarks, five for $\mathsf{1DOSP}$ (``1T-x'') and four for $\mathsf{2DOSP}$ (``2T-x'').
The sizes of all the character candidates are set to $40\mu m \times 40 \mu m$.
For $\mathsf{1DOSP}$ benchmarks, the row number is set to 1, and the row length is set to 200.
The comparisons are listed in Table \ref{table:small}, where column ``candidate\#'' is the number of character candidates.
``\textbf{ILP}'' and ``\textbf{E-BLOW}'' represent the ILP formulation and our E-BLOW framework, respectively.
In ILP formulation, column ``binary\#'' gives the binary variable number.
For each mode, we report ``T'', ``char\#'' and ``CPU(s)'',
where ``T'' is E-Beam system writing time, ``char\#'' is character number on final stencil, and ``CPU(s)'' is the runtime.
Note that in Table \ref{table:small} the ILP solutions are optimal.

Let us compare E-BLOW with ILP formulation for 1D cases (1T-1, $\dots$, 1T-5).
E-BLOW can achieve the same results with ILP formulations, meanwhile it is very fast that all cases can be finished in 0.2 seconds.
Although ILP formulation can achieve optimal results,
it is very slow that a case with 14 character candidates (1T-5) can not be solved in one hour.
Next, let us compare E-BLOW with ILP formulation for 2D cases (2T-1, $\dots$, 2T-4).
For 2D cases ILP formulations are slow that if the character candidate number is 12, it cannot finish in one hour.
E-BLOW is fast, but with some solution quality penalty.

Although the integral variable number for each case is not huge,
we find that in the ILP formulations, the solutions of corresponding LP relations are vague.
Therefore, expensive search method may cause unacceptable runtimes.
From these cases ILP formulations are impossible to be directly applied in $\mathsf{OSP}$ problem,
as in MCC system character number may be as large as $4000$.

\section{Conclusion}
\label{sec:conclu}

In this paper, we have proposed E-BLOW, a tool to solve OSP problem in MCC system.
For $\mathsf{1DOSP}$, a successive relaxation algorithm and a dynamic programming based refinement are proposed.
For $\mathsf{2DOSP}$, a KD-Tree based clustering method is integrated into simulated annealing framework.
Experimental results show that compared with previous works, E-BLOW can achieve better performance in terms of shot number and runtime, for both MCC system and traditional EBL system.
Note that the extra cost for multiple stencils is mostly the cost of multiple stencil design,
thus different regions tend to have specific stencils to improve the throughput.
However, if a shared stencil is well-designed and optimized that such sharing can achieve very comparable throughput, we can even reduce the stencil design cost.
In that situation, sharing stencil design could be attractive, especially for the companies that have limited design budget.
As EBL, including MCC system, are widely used for mask making and also gaining momentum for direct wafer writing, we believe a lot more research can be done for not only stencil planning, but also EBL aware design.

\section*{Acknowledgment}

This work is supported in part by NSF grants CCF-0644316 and CCF-1218906, SRC task 2414.001, NSFC grant 61128010, and IBM Scholarship.
The authors would like to thank Prof. Shiyan Hu at Michigan Technological University and Zhao Song at University of Texas for helpful comments.

\bibliographystyle{IEEEtran}
\bibliography{./ref/Top,./ref/Algorithm,./ref/Optimization,./ref/EBL,./ref/Lith,./ref/MPL,./ref/Floorplan,./ref/Place,./ref/Software}

\appendix

\noindent
\underline{PROOF OF THEOREM \ref{thm:npc_bss}}

\begin{mylemma}\label{lem:np_bss}
    $\mathsf{BSS}$ problem is in NP.
\end{mylemma}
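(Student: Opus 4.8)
The plan is to exhibit a polynomial-time verifier, which is the standard route to establishing membership in NP. For a ``yes'' instance of $\mathsf{BSS}$, by definition there exists a subset $S' \subseteq S$ whose elements sum to $s$; I would take this subset itself as the certificate. Concretely, I would encode the certificate as a characteristic vector $(b_1, \ldots, b_n) \in \{0,1\}^n$, where $b_i = 1$ exactly when $x_i \in S'$. Such a vector has length $n$, which is clearly polynomial in the size of the input instance.

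The verifier then computes $\sum_{i=1}^{n} b_i \cdot x_i$ and accepts if and only if this sum equals $s$. I would argue that this computation runs in polynomial time: it performs at most $n$ additions of integers, each of bit-length bounded by the encoded size of the input, together with a single comparison against $s$. Both the summation and the comparison are polynomial in the total input length, so the verifier is polynomial-time.

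To complete the argument, I would confirm correctness in both directions. If the instance is a ``yes'' instance, then the certificate corresponding to a valid subset causes the verifier to accept; if it is a ``no'' instance, then no subset sums to $s$, so every candidate certificate leads the verifier to reject. I would also remark that the bounded constraint $2 \cdot x_i > x_{\max}$ plays no role in this argument---it is needed only later for the NP-hardness reduction---so the verifier operates correctly regardless of whether that constraint holds.

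Since the essential content is merely a standard verifier construction, I do not anticipate a serious obstacle; the only point requiring a little care is confirming that both the certificate length and the verification time are polynomial in the input size, which follows once the numbers are measured by their encoded bit-length rather than by their magnitude.
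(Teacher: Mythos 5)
Your proposal is correct and follows essentially the same route as the paper, which likewise certifies membership in NP by taking the subset itself as the certificate and verifying the sum in polynomial time; your version merely spells out the characteristic-vector encoding and the bit-length accounting in more detail.
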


\begin{proof}
It is easy to see that $\mathsf{BSS}$ problem is in NP.
Given a subset of integer numbers $S' \in S$, we can add them up and verify that their sum is $s$ in polynomial time.
\end{proof}


\begin{mylemma}\label{lem:nph_bss}
$\mathsf{3SAT}~$ $\leq_p$ $\mathsf{BSS}$.
\end{mylemma}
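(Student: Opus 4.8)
The plan is to adapt the textbook reduction from $\mathsf{3SAT}$ to ordinary Subset Sum and then repair it so that the resulting numbers satisfy the bounded condition $2 x_i > x_{\max}$. First I would take a $\mathsf{3SAT}$ instance with variables $u_1,\dots,u_n$ and clauses $C_1,\dots,C_m$, and build numbers written in base $10$ (so that no column ever carries) using three blocks of digit positions: one position per variable, one ``satisfaction'' position per clause, and one ``selector'' position per clause. For each variable $u_i$ I would create two numbers $t_i$ and $f_i$, both with a $1$ in variable position $i$, where $t_i$ additionally has a $1$ in the satisfaction column of every clause containing the literal $u_i$, and $f_i$ has a $1$ in the satisfaction column of every clause containing $\bar u_i$. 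The target would carry a $1$ in each variable position, forcing exactly one of $t_i,f_i$ to be chosen, which encodes the truth assignment.

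The key gadget is the treatment of the clauses, which is where I expect the main difficulty to lie. To make every satisfying selection use the same number of chosen elements -- a property the classical reduction does not have -- I would give each clause $C_j$ three slack numbers $e_j,g_j,h_j$ carrying satisfaction-digits $0,1,2$ respectively, each also carrying a $1$ in selector position $j$; the target would hold a $3$ in each satisfaction position and a $1$ in each selector position. The selector digit forces exactly one of $e_j,g_j,h_j$ into any solution, and then the satisfaction digit $3$ can be met only if the literals of $C_j$ already contribute at least $1$, i.e.\ only if $C_j$ is satisfied. Consequently every solution of this Subset Sum instance chooses exactly one element per variable and exactly one per clause, so it has a fixed cardinality $K=n+m$, and such a solution exists if and only if the formula is satisfiable.

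With cardinality pinned down, the final step handles the bounded constraint. Let $B$ bound all the constructed numbers and the target $s_0$, let $N=2n+3m$ be the number of elements, and set a large offset $L=N\cdot B$. I would replace each number $a$ by $a+L$ and replace the target by $s=s_0+K\cdot L$. Every new number then lies in $[L,L+B)$ with $L>B$, so $y_{\max}<L+B<2L\le 2(a+L)$, which verifies $2 x_i>x_{\max}$. Because the original numbers sum to less than $L$, any subset of size $k$ now sums into the disjoint interval $[kL,(k+1)L)$; since $s$ lies in the block for $k=K$, a subset hits $s$ if and only if it has size $K$ and its original part sums to $s_0$. Thus the offset leaves the solution set unchanged and the equivalence with $\mathsf{3SAT}$ is preserved. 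The hard part is the cardinality-fixing selector gadget: without it the offset trick would conflate subsets of different sizes, so obtaining a fixed $K$ before adding $L$ is the crux of the argument. Combined with Lemma \ref{lem:np_bss}, this establishes Theorem \ref{thm:npc_bss}.
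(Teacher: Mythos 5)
Your proof is correct and follows essentially the same route as the paper's: the standard digit-based $\mathsf{3SAT}$-to-Subset-Sum encoding, augmented with per-clause selector digits so that every satisfying subset has fixed cardinality $n+m$, followed by a uniform large offset on every number to enforce $2x_i > x_{\max}$ without disturbing the equivalence (the paper realizes this offset as a leading base-10 digit equal to $1$ on each number, with the target's leading digit set to $n+m$). The only cosmetic differences are your clause slack digits $0,1,2$ against a column target of $3$ versus the paper's $1,2,3$ against a target of $4$, and your offset $L=NB$ versus the paper's power of ten.
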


\begin{proof}
In $\mathsf{3SAT}$ problem, we are given $m$ clauses $\{C_1, C_2, \dots, C_m\}$ over $n$ variables $\{y_1, y_2, \dots, y_n\}$.
Besides, there are three literals in each clause, which is the OR of some number of literals.
Eqn. (\ref{eq:ex_3sat}) gives one example of $\mathsf{3SAT}$, where  $n=4$ and  $m=2$.
\begin{equation}\label{eq:ex_3sat}
  (y_1 \vee \bar y_3 \vee \bar y_4) \wedge (\bar y_1 \vee y_2 \vee \bar y_4)
\end{equation}

Without loss of generality, we can have the following assumptions:
\begin{enumerate}
    \item No clause contains both variable $y_i$ and $\bar y_i$.
          Otherwise, any such clause is always true and we can just eliminate them from the formula.
    \item Each variable $y_i$ appears in at least one clause.
          Otherwise, we can just assign any arbitrary value to the variable $y_i$.
\end{enumerate}

To convert a $\mathsf{3SAT}$ instance to a $\mathsf{BSS}$ instance, we create two integer numbers in set $S$ for each variable $y_i$ and three integer numbers in $S$ for each clause $C_j$.
All the numbers in set $S$ and $s$ are in base 10.
Besides, $10^{n+2m} < y_i < 2 \cdot 10^{n+2m}$, so that the bounded constraints are satisfied.
All the details regarding $S$ and $s$ are defined as follows.
\begin{itemize}
    \item
    In the set $S$, all integer numbers are with $n+2m+1$ digits, and the first digit are always 1.

    \item
    In the set $S$, we construct two integer numbers $t_i$ and $f_i$ for the variable $y_i$.
    For both of the values, the $n$ digits after the first `1' serve to indicate the corresponding variable in $S$.
    That is, the $i^{th}$ digit in these $n$ digits is set to 1 and all others are 0.
    For the next $m$ digits, the $j^{th}$ digit is set to 1 if the clause $C_j$ contains the respective literal.
    The last $m$ digits are always 0.

    \item
    In the set $S$, we also construct three integer numbers $c_{j1}, c_{j2}$ and $c_{j3}$ for each clause $C_j$.
    In $c_{jk}$ where $k = \{1, 2, 3\}$, the first $n$ digits after the first `1' are 0, and in the next $m$ digits all are 0 except the $j^{th}$ index setting to $k$.
    The last $m$ digits are all 0 except the $j^{th}$ index setting to 1.

    \item $T= (n+m)\cdot 10^{n+2m} + s_0$, where $s_0$ is an integer number with $n+2m$ digits.
    The first $n$ digits of $s_0$ are 1, in the next $m$ digits all are 4, and in the last $m$ digits all are 1.
\end{itemize}

Based on the above rules, given the $\mathsf{3SAT}$ instance in Eqn. (\ref{eq:ex_3sat}) the constructed set $S$ and target $s$ are shown in Fig. \ref{fig:ex_bss}.
Note that the highest digit achievable is 9, meaning that no digit will carry over and interfere with other digits.
\begin{figure}[htb!]
\centering
\begin{tabular}{ccccccccccc}
           &   &     &$y_1$ &$y_2$ &$y_3$ &$y_4$   &$C_1$  &$C_2$ &$A_1$  &$A_2$   \\
  \hline                                                                            
  $t_1$    &=  &1    &1     &0     &0      &0      &1      &0     &0      &0       \\
  $f_1$    &=  &1    &1     &0     &0      &0      &0      &1     &0      &0       \\
  $t_2$    &=  &1    &0     &1     &0      &0      &0      &1     &0      &0       \\
  $f_2$    &=  &1    &0     &1     &0      &0      &0      &0     &0      &0       \\
  $t_3$    &=  &1    &0     &0     &1      &0      &0      &0     &0      &0       \\
  $f_3$    &=  &1    &0     &0     &1      &0      &1      &0     &0      &0       \\
  $t_4$    &=  &1    &0     &0     &0      &1      &0      &0     &0      &0       \\
  $f_4$    &=  &1    &0     &0     &0      &1      &1      &1     &0      &0       \\
  $c_{11}$ &=  &1    &0     &0     &0      &0      &1      &0     &1      &0       \\
  $c_{12}$ &=  &1    &0     &0     &0      &0      &2      &0     &1      &0       \\
  $c_{13}$ &=  &1    &0     &0     &0      &0      &3      &0     &1      &0       \\
  $c_{21}$ &=  &1    &0     &0     &0      &0      &0      &1     &0      &1       \\
  $c_{22}$ &=  &1    &0     &0     &0      &0      &0      &2     &0      &1       \\
  $c_{23}$ &=  &1    &0     &0     &0      &0      &0      &3     &0      &1       \\
  \hline                                                                           
  $s$      &=  &6    &1     &1     &1      &1      &4      &4     &1      &1       \\
  $s_0$    &=  &     &1     &1     &1      &1      &4      &4     &1      &1       \\
\end{tabular}
\caption{The constructed $\mathsf{BSS}$ instance for the given $\mathsf{3SAT}$ instance in (\ref{eq:ex_3sat}).}
\label{fig:ex_bss}
\vspace{-.1in}
\end{figure}

\begin{myclaim}
The $\mathsf{3SAT}$ instance has a satisfying truth assignment iff the constructed $\mathsf{BSS}$ instance has a subset that adds up to $s$.
\end{myclaim}

{\bf Proof of $\Rightarrow$ part of Claim}:
If the $\mathsf{3SAT}$ instance has a satisfying assignment,
we can pick a subset containing all $t_i$ for which $y_i$ is set to true and $f_i$ for which $y_i$ is set to false.
We should then be able to achieve $s$ by picking the necessary $c_{jk}$ to get 4's in the $s$.
Due to the last $m$ `1' in $s$, for each $j \in [m]$ only one would be selected from $\{c_{j1}, c_{j2}, c_{j3}\}$.
Besides, we can see totally $n+m$ numbers would be selected from $S$.

{\bf Proof of $\Leftarrow$ part of Claim}:
If there is a subset $S' \in S$ that adds up to $s$, we will show that it corresponds to a satisfying assignment in the $\mathsf{3SAT}$ instance.
$S'$ must include exactly one of $t_i$ and $f_i$, otherwise the $i$th digit value of $s_0$ cannot be satisfied.
If $t_i \in S'$, in the $\mathsf{3SAT}$ we set $y_i$ to true; otherwise we set it to false.
Similarly, $S'$ must include exactly one of $c_{j1}, c_{j2}$ and $c_{j3}$, otherwise the last $m$ digits of $s$ cannot be satisfied.
Therefore, all clauses in the $\mathsf{3SAT}$ are satisfied and $\mathsf{3SAT}$ has a satisfying assignment.

\end{proof}

For instance, given a satisfying assignment of Eqn. (\ref{eq:ex_3sat}): $\langle y_1=0, y_2=1, y_3=0, y_4=0 \rangle$,
the corresponding subset $S'$ is $\{f_1=110000100, t_2=101000100, f_3=100101000, f_4=100011100, c_{12}=100002010, c_{21}=100000101\}$.
We set $s = (m+n) \cdot 10^{n+2m} + s_0$, where $s_0 = 11114411$, and then $s = 611114411$.
We can see that $f_1+t_2+f_3+f_4+s_{12}+s_{21} = s$.

Combining Lemma \ref{lem:np_bss} and Lemma \ref{lem:nph_bss}, we can achieve the following theorem.

\end{document}